\documentclass[acmsmall,nonacm,screen,authorversion]{acmart}

\AtBeginDocument{%
  \providecommand\BibTeX{{%
    \normalfont B\kern-0.5em{\scshape i\kern-0.25em b}\kern-0.8em\TeX}}}

\usepackage{caption}
\usepackage{subcaption}
\usepackage{hyperref}
\usepackage{bm}
\usepackage{booktabs}
\usepackage{multirow}
\usepackage[irlabel]{fcps}

\usepackage{tikz}
\usetikzlibrary{positioning}
\usetikzlibrary{shapes.multipart}
\usetikzlibrary{arrows}

\usepackage{prettyref}
\newcommand{\rref}[2][]{\prettyref{#2}}
\newrefformat{sec}{Section\,\ref{#1}}
\newrefformat{app}{Appendix\,\ref{#1}}
\newrefformat{def}{Def.\,\ref{#1}}
\newrefformat{thm}{Theorem\,\ref{#1}}
\newrefformat{prop}{Proposition\,\ref{#1}}
\newrefformat{lem}{Lemma\,\ref{#1}}
\newrefformat{cor}{Corollary\,\ref{#1}}
\newrefformat{rem}{Remark\,\ref{#1}}
\newrefformat{ex}{Example\,\ref{#1}}
\newrefformat{tab}{Table\,\ref{#1}}
\newrefformat{fig}{Fig.\,\ref{#1}}
\newrefformat{case}{case\,\ref{#1}}
\newrefformat{foot}{Footnote\,\ref{#1}}
\newrefformat{line}{line\,\ref{#1}}
\newrefformat{model}{Model\,\ref{#1}}

\newcommand{\rp}{\ensuremath{r_p}}
\newcommand{\hp}{\ensuremath{h_p}}

\newcommand{\imagescale}{0.23}

\floatstyle{ruled}
\newfloat{model}{tbp}{model}
\floatname{model}{Model}
\newcounter{modelline}
\makeatletter
\newcommand{\mline}[1]{\refstepcounter{modelline}\ltx@label{#1}\quad\text{\scriptsize{\themodelline}}\quad}
\makeatother

\usepackage{ifpdf}
\ifpdf
\pdfinfo{
   /Author (Rachel Cleaveland, Stefan Mitsch, Andre Platzer)
   /Title (Formally Verified Next-Generation Airborne Collision Avoidance Games in ACAS X)
   /Keywords (Airborne Collision Avoidance, ACAS X, theorem proving, hybrid games, differential game logic)
}
\fi

\begin{document}

\title{Formally Verified Next-Generation Airborne Collision Avoidance Games in ACAS~X}
\author{Rachel Cleaveland}
\email{rcleavel@andrew.cmu.edu}
\orcid{0000-0002-6306-9502}
\author{Stefan Mitsch}
\email{smitsch@cs.cmu.edu}
\orcid{0000-0002-3194-9759}
\author{Andr{\'e} Platzer}
\email{aplatzer@cs.cmu.edu}
\orcid{0000-0001-7238-5710}
\affiliation{%
  \institution{Carnegie Mellon University}
  \streetaddress{5000 Forbes Avenue}
  \city{Pittsburgh}
  \state{PA}
  \postcode{15213}
  \country{USA}}

\begin{abstract}
The design of aircraft collision avoidance algorithms is a subtle but important challenge that merits the need for provable safety guarantees. Obtaining such guarantees is nontrivial given the unpredictability of the interplay of the intruder aircraft decisions, the ownship pilot reactions, and the subtlety of the continuous motion dynamics of aircraft. Existing collision avoidance systems, such as TCAS and the Next-Generation Airborne Collision Avoidance System ACAS~X, have been analyzed assuming severe restrictions on the intruder's flight maneuvers, limiting their safety guarantees in real-world scenarios where the intruder may change its course.

This work takes a conceptually significant and practically relevant departure from existing ACAS~X models by generalizing them to hybrid games with first-class representations of the ownship and intruder decisions coming from two independent players, enabling significantly advanced predictive power. By proving the existence of winning strategies for the resulting Adversarial ACAS~X in differential game logic, collision-freedom is established for the rich encounters of ownship and intruder aircraft with independent decisions along differential equations for flight paths with evolving vertical/horizontal velocities. We present three classes of models of increasing complexity: single-advisory infinite-time models, bounded time models, and infinite time, multi-advisory models. Within each class of models, we identify symbolic conditions and prove that there then always is a possible ownship maneuver that will prevent a collision between the two aircraft.
\end{abstract}

\begin{CCSXML}
<ccs2012>
<concept>
<concept_id>10003752.10003753.10003765</concept_id>
<concept_desc>Theory of computation~Timed and hybrid models</concept_desc>
<concept_significance>500</concept_significance>
</concept>
<concept>
<concept_id>10003752.10003790.10003793</concept_id>
<concept_desc>Theory of computation~Modal and temporal logics</concept_desc>
<concept_significance>500</concept_significance>
</concept>
<concept>
<concept_id>10003752.10003790.10003806</concept_id>
<concept_desc>Theory of computation~Programming logic</concept_desc>
<concept_significance>500</concept_significance>
</concept>
<concept>
<concept_id>10010520.10010553.10010562</concept_id>
<concept_desc>Computer systems organization~Embedded systems</concept_desc>
<concept_significance>300</concept_significance>
</concept>
</ccs2012>
\end{CCSXML}

\ccsdesc[500]{Theory of computation~Timed and hybrid models}
\ccsdesc[500]{Theory of computation~Modal and temporal logics}
\ccsdesc[500]{Theory of computation~Programming logic}
\ccsdesc[300]{Computer systems organization~Embedded systems}

\keywords{Airborne Collision Avoidance, ACAS~X, theorem proving, hybrid games, differential game logic}

\maketitle

\providecommand{\bebecomes}{\mathrel{::=}}
\providecommand{\alternative}{~|~}
\providecommand{\ctrl}{\ensuremath\textit{ctrl}}
\providecommand{\plant}{\ensuremath\textit{plant}}

\section{Introduction}\label{sec:introduction}

Mid-air aircraft collisions are a fundamental responsibility of pilots and air traffic controllers to avoid, but their likelihood only increases as air space gets more congested and Unmanned Aerial Vehicles become more prevalent. The first onboard collision avoidance system, known as Traffic Alert and Collision Avoidance System (TCAS), was developed in the 1970s and has successfully prevented several mid-air collisions. However, this system is not perfect; one particular failure of TCAS occurred in the 2002 {\"U}berlingen crash, where two airplanes collided despite having received instructions by their TCAS systems onboard. Tragedies like this underscore the importance of continued research into developing and formally verifying onboard collision avoidance systems. 

Most of the time, when aircraft are on a collision course, they are detected and resolved in advance by either the pilots or flight directors of the Air Route Traffic Control Centers. However, in rare scenarios where conflicting flight paths were not detected early enough and two aircraft are on an immediate collision course, collision avoidance maneuvers must be performed as a last resort. With little time to determine and perform the necessary maneuvers to avoid collision, it is imperative to ensure the safety of these collision avoidance maneuvers in advance using formal verification under all reasonable flight circumstances to ensure that no midair collisions happen. 

The TCAS, and the more recent ACAS~X, collision avoidance systems developed by the Federal Aviation Association (FAA) give vertical ascent/descent advisories when an aircraft is encountering an intruder with which it is at risk of colliding \cite{FAA2}. The goal of ACAS~X is to prevent Near Mid-Air Collisions (\emph{NMACs}), dangerous situations where two aircraft come within $r_p = 500$\,ft horizontally and $h_p = 100$\,ft vertically of each other \cite{kochenderfer}. These variables $r_p$ and $h_p$ describe the radius and height, respectively, of a puck surrounding the aircraft, into which no other aircraft should enter.

Previous work explores formal verification of ACAS~X when the intruder aircraft is moving at a constant horizontal and vertical velocity \cite{acasx}. This assumption is rigid and does not take into account the potential maneuvers that the intruder may perform.
This article takes a conceptually significant departure by generalizing formally verified ACAS~X models from hybrid \emph{systems} to \emph{hybrid games}, owing to the fundamental observation that, despite best intent, the actions of the ownship and intruder aircraft may interfere with one another since they are resolved by different pilots with different situational awareness facing a challenging safety hazard.
This generalization is of practical relevance for the predictive power of verified ACAS~X models but requires a fundamental shift in reasoning using differential game logic for hybrid games \cite{dGL2,dGL3,Platzer18}.
Hybrid systems are fundamentally single player. Only hybrid games can faithfully represent a dynamics where different pilots of different aircraft may independently reach different decisions at different times with different consequences on the flight of the two aircraft.
While all pilots share the intent of avoiding collisions, only hybrid games accurately reflect that their decisions may, nevertheless, interfere, because the pilots chose different means to avoid collisions that may conflict.

\subsection{Airborne Collision Avoidance System ACAS~X} \label{sec:ACASXSec}

ACAS~X tracks the position and velocity of the ownship and intruders in its vicinity using a variety of sensors to compute its collision avoidance advisories \cite{FAA1}. An advisory alerts the pilot with an audio-visual message and requests that she either maintain her vertical speed, or accelerate towards a new desired vertical speed. An advisory is issued only when a potential collision is identified, otherwise the system stays quiet to avoid distracting the pilot \cite{FAA2}. Advisories apply only in the vertical direction, not the horizontal direction, and only apply to the aircraft's climb rate. 

\begin{table}[tbhp]
    \centering
    \caption{ACAS~X advisories and their parameters as summarized in \cite{acasx}}%
    \label{tab:1}%
    \includegraphics[scale=0.47]{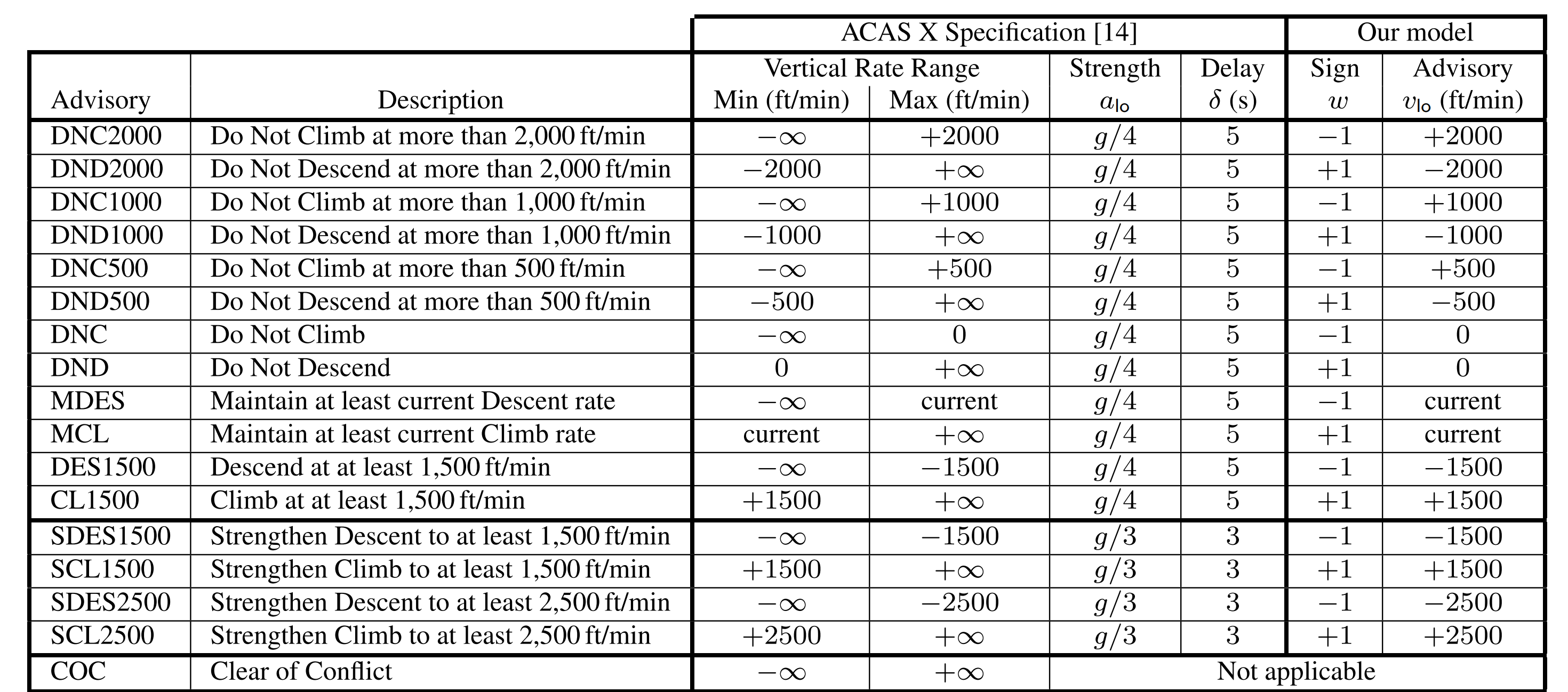}%
\end{table}

Table~\ref{tab:1} gives all of the 16 possible advisories issued by ACAS~X, plus Clear-of-Conflict (COC), which indicates that no action is necessary. These advisories vary in the extremeness of the action; a less extreme advisory like Do Not Descend (DND) only requires that the ownship does not, as the name suggests, descend past its current altitude. A more extreme advisory like SCL2500 requires the ownship to reach a climb rate of at least 2,500\,ft/min. Advisories can also be either lower bounds, like SCL2500, or upper bounds, like DNC2000 which requires that the ownship not exceed a climb rate of more than 2000\,ft/min. 
The FAA assumes that in order for the pilot to achieve the desired climb or descend rate, she does so by following a vertical acceleration of strength at least $g/4$ (referred to as the positive constant $a_{\text{lo}}$) \cite{MDP}, an assumption which will be pertinent later. 

These advisories result from an estimation of the pilot's optimal course of action, calculated by linearly interpolating a precomputed table of scores for various actions. The domain of this table includes parameters describing the state of the encounter, while its range gives scores for each possible action \cite{MDP}. This table is constructed from a Markov Decision Process which approximates the dynamics of the system on a discretized grid of the state space. From there, dynamic programming is used to optimize the table through maximizing the expected value of each event over all future outcomes for each action \cite{MDP}. These expected values approximately map to different outcomes: Near Mid-Air Collisions (NMACs), for example, correspond to large negative values, while issuing advisories corresponds to small negative values. The ACAS~X system then uses a multilinear interpolation of grid points and heuristics to choose the action with the greatest expected value given the particular circumstances surrounding the ownship's current flight conditions.

\subsection{Formally Verified Safe Regions and Hybrid Game Logic} \label{sec:hybridgamelogic}

Previous work~\cite{acasx2,acasx} applied hybrid systems to the formal verification process, a natural application given the combination of discrete advisories and continuous dynamics of an aircraft using ACAS~X. While direct verification of the ACAS~X implementation is infeasible given the complexity of ACAS~X (whose core lookup table defines 29,212,664 interpolation regions in a 5-dimensional state-space giving rise to at least half a trillion cases to consider), this work cut down the complexity with the concept of \emph{safe regions}. A region is proven safe if for all possible ownship positions and velocities within the region, an NMAC with the intruder will never occur. Thus, if ACAS~X issues an advisory, and following this advisory in any permitted way always keeps the ownship within our safe region, then this advisory is guaranteed to maintain safety. These regions comprise fully symbolic parameters like $a_{\text{lo}}$, making them easily adaptable to new ACAS~X versions.

In this work, we continue with the identification of safe regions to prove safety of the overall system, but we make the important change of applying \emph{hybrid games}, rather than \emph{hybrid systems}, in our formal verification of ACAS~X. This change is motivated by the goal to model scenarios in which the intruder is maneuvering, such as being able to change its horizontal direction or vertical velocity. Where hybrid systems only allow one actor in the system to resolve decisions, hybrid games give multiple actors \emph{independent} decision-making ability.

More specifically, the models presented in this article are rephrased using Differential Game Logic $\dGL$~\cite{dGL2,dGL3,Platzer18} from the Differential Dynamic Logic $\dL$~\cite{dL1,dL2,dL3,dL4,Platzer18} of previous work. $\dGL$ is an extension of $\dL$, so it also supports discrete assignments, control structures, and following of differential equations to represent pilot decisions, trajectory requirements, and aircraft dynamics, respectively. However, $\dGL$ can also represent adversarial dynamics, meaning $\dGL$ can express two different players in a game scenario making independent decisions that may interfere.
We make crucial use of this multi-player dynamics in our ACAS~X game models in order to enable both aircraft to maneuver independently.
Contrast this flexibility with hybrid systems models of ACAS~X~\cite{acasx2,acasx}, which are necessarily limited to a single fixed policy for the intruder (the intruder cannot  maneuver but is assumed to follow a straight line trajectory in prior ACAS~X work~\cite{acasx2,acasx}).
 
In the context of collision avoidance, one can think of the ownship as being a good-faith actor attempting to avoid collision, while the intruder is able to act independently in ways that, perhaps out of confusion, may interfere with the safety of the system. Since these two players follow independent intent, $\dGL$ works perfectly in this scenario to express these adversarial dynamics.

\begin{figure}
    \centering
    \includegraphics[width=\columnwidth]{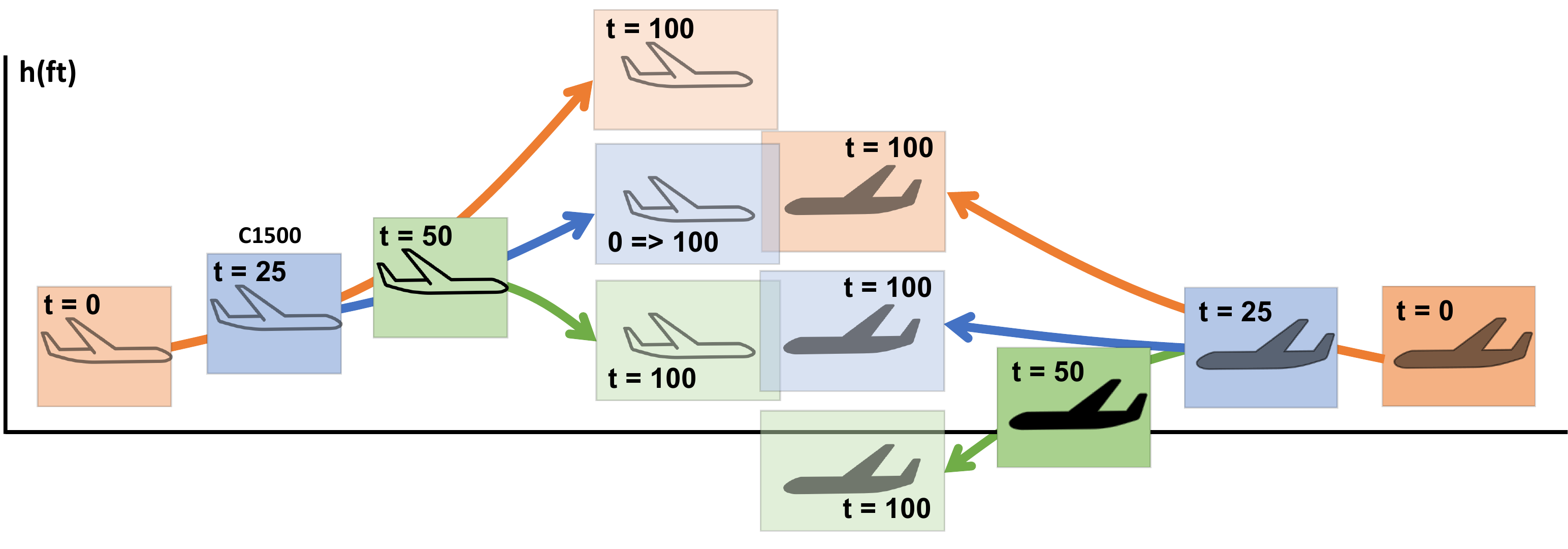}
    \Description[Illustrates ownship reactions to intruder behavior of varying cooperativeness]{Illustrates the reactions of the ownship over encounters with intruders of varying cooperativeness: reactions to an intruder that interferes the ownships path vs. is somewhat cooperative vs. helps resolve the conflict.}
    \caption{An encounter between the ownship and example reactions to an intruder that interferes (orange trajectories), is somewhat cooperative (blue trajectories), or helps resolve the conflict (green trajectories).}
    \label{fig:encounter-overview}
\end{figure}

Of course, in the real world intruder will not actively attempt to collide with the ownship, but if the ownship's goal is to avoid collision no matter the actions of the intruder, it is important to consider even the worst-case maneuvers that the intruder may perform.
Notably, our ACAS~X game model considers the case where the intruder's actions \emph{may} interfere with the safety of the system but does not assume they will.
Indeed, the actions that the ownship pilot's winning strategies for the ACAS~X game needs to take to avoid collision are less extreme when the intruder pilot reaches helpful decisions and more extreme otherwise, see \rref{fig:encounter-overview}.
$\dGL$ is implemented in the theorem prover \KeYmaeraX~\cite{kyx}, with which we verify our safe regions with respect to our models. 

As far as we know, this is the first work to apply hybrid games to the problem of aircraft collision-avoidance. Hybrid games enrich the fidelity of the safety analysis for collision-avoidance algorithms, because they capture the important phenomenon that the respective pilots of intruder and ownship aircraft reach their decisions independently, while, at the same time, being faithful to the advisories of ACAS~X. 
Given the range of trajectories that either pilot could follow as they react to their mutual responses during an encounter, a game theory perspective on collision-avoidance greatly expands the scenarios which can be modeled and proven. 

The article is organized as follows. In \rref{sec:modelingapproachSec}, we give an overview of the structure that the models have in common. In Sections~\ref{sec:InfNon}, \ref{sec:InfVert}, and \ref{sec:InfHoriz} we introduce infinite-horizon safe models in which the intruder is given no maneuverability, vertical maneuverability, and horizontal maneuverability, respectively. In Sections~\ref{sec:BoundNon} and \ref{sec:BoundVert}, we introduce finite-horizon safe models to act as a stepping stone to the infinite-horizon models in Sections~\ref{sec:SafeNon} and \ref{sec:SafeVert}. Finally, \rref{sec:SafeNon} uses the concept of \emph{safeability} from our previous work~\cite{acasx}, in which the ownship can follow an initial advisory for finite time, and a subsequent advisory forever after. \rref{sec:SafeVert} adds intruder maneuverability to this scenario. 

The models we consider come in three categories: infinite-time models, $\varepsilon$-time models, and safeable models, which increase in complexity. Each category introduces a model that does not grant the intruder any maneuverability to establish intuition, before introducing the model(s) in which the intruder may maneuver.
\emph{The \KeYmaeraX models and proofs of all theorems are online\footnote{All \KeYmaeraX models and proofs are at \url{https://github.com/LS-Lab/KeYmaeraX-projects/tree/master/acasx/acasx-games}}.}

\section{Overview of the ACAS~X Modeling Approach} \label{sec:modelingapproachSec}

To establish intuition for our modeling approach of these flight scenarios, consider a scenario in which an ownship and an intruder are in the same flight space. The intruder at any point in time has the option to change its trajectory within a reasonable bound; the union of all of these possible trajectories at any future point in time describes the unsafe region for the ownship. If at a point in time, the ownship puck overlaps with a possible position that the intruder could be at at that time, we know that such an ownship trajectory is not provably safe because there is a series of ownship and intruder actions which could lead to an NMAC. Therefore, if the ownship is outside of this region then an NMAC cannot possibly occur, and the ownship is safe. 

\begin{figure}[tbhp]
    \centering
    \includegraphics[scale=\imagescale]{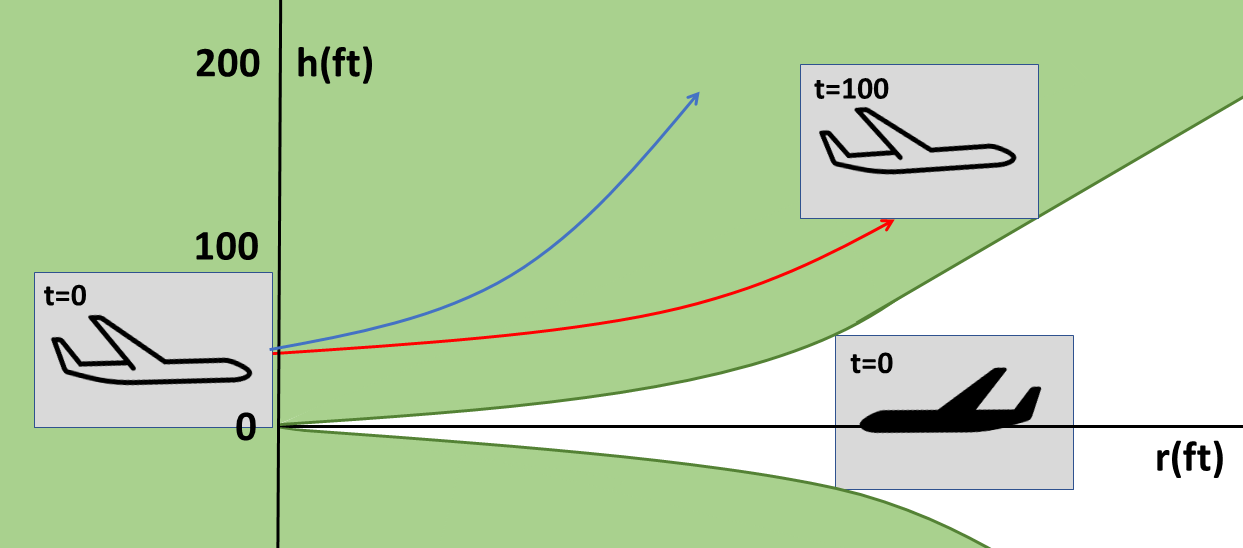}
    \Description[Nominal ownship trajectory and example of a compliant trajectory]{Illustrates the nominal trajectory of an ownship accelerating towards an upsense advisory, with an example of a compliant trajectory. The ownship passes safely above the intruder.}
    \caption{Nominal trajectory (red) within the safe region (green) of an ownship accelerating towards an upsense advisory, with an example of a compliant trajectory (blue).}
    \label{fig:firstnominal}
\end{figure}

Figure~\ref{fig:firstnominal} exemplifies a head-on encounter with the associated safe region for the intruder when the ownship follows a CL1500 advisory per \rref{tab:1}. The coordinate system is fixed at the intruder and centered on the initial position of the ownship. The ownship starts at a relative vertical separation of 0, but a large horizontal separation from the intruder. Upon receiving the CL1500 advisory, it accelerates upwards with acceleration at least $a_{\text{lo}}$, but within the aircraft limits $a_\text{max}$. Once it reaches a vertical velocity of at least 1500\,ft/min, it follows a linear path upwards until clearing the intruder aircraft. The green region is the region of safety which guarantees no NMAC (as long as the ownship follows the advisory), and the red line is the \emph{nominal trajectory} representing minimal compliance with the advisory; the ownship can always choose to accelerate more than $a_{\text{lo}}$ or reach a final upward velocity which is greater than the advisory, and this still qualifies as following the advisory giving uncountably many possible flight trajectories. 
An orthogonal question of equal impact on the safety of the outcome is the sequence of choices of the intruder aircraft.

\subsection{Dynamics} \label{sec:dynamicsSec}
Figure~\ref{fig:IO} shows one encounter between ownship $O$ and intruder $I$. We follow conventions established in the ACAS~X community \cite{MDP}, letting $r = \lVert \textbf{r}\rVert$ be the horizontal distance and $h$ be the vertical separation between the two aircraft. Both positions of the intruder are relative to the ownship. 
	
\begin{figure}[tbhp]
\centering
\begin{subfigure}{.5\textwidth}
  \centering
  \includegraphics[scale=0.4]{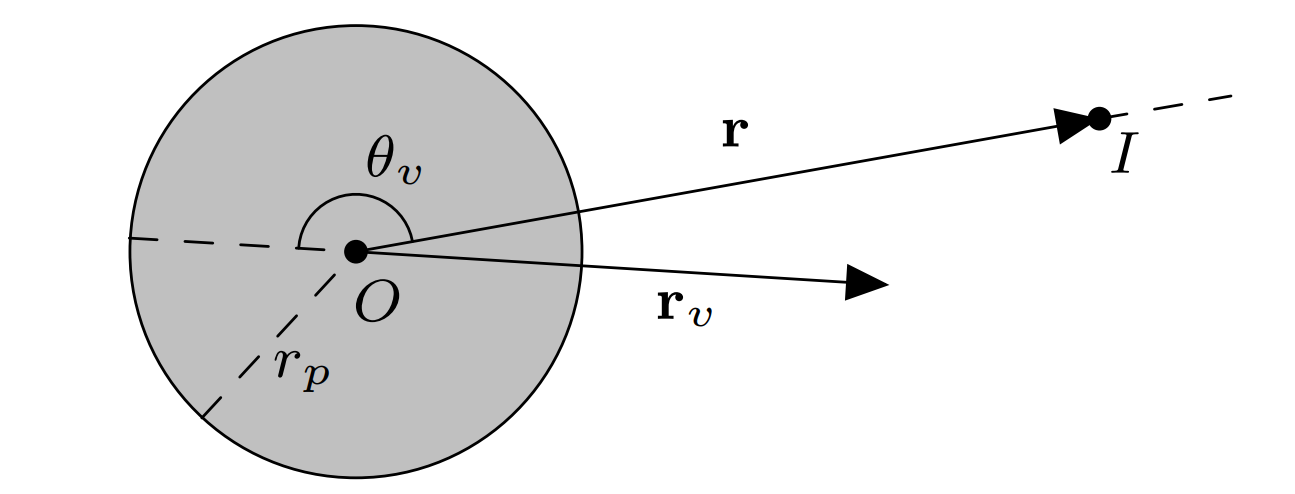}
  \caption{Top view of the encounter} 
  \label{fig:sub1}
\end{subfigure}%
\begin{subfigure}{.5\textwidth}
  \centering
  \includegraphics[scale=0.5]{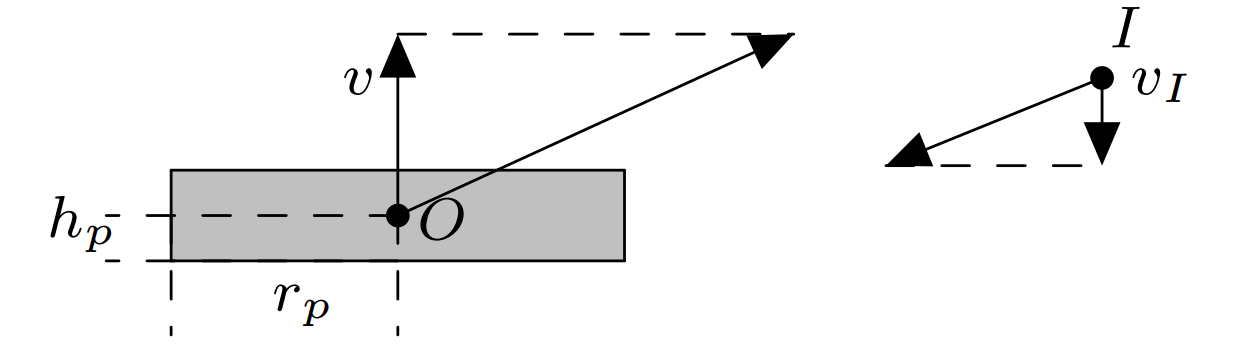}
  \caption{Side view of the encounter}
  \label{fig:sub2}
\end{subfigure}
\Description[An encounter scenario between ownship and intruder, with encasing puck]{An encounter scenario between ownship and intruder. The puck encasing the ownship describes the space around the ownship that may not be entered by intruders.}
\caption{An encounter scenario between ownship $O$ and intruder $I$, with encasing puck shown in gray \cite{acasx}}
\label{fig:IO}
\end{figure}

In this article, we relax assumptions from previous work \cite{acasx2,acasx} in order to give more maneuverability to the intruder. First, we do not necessarily assume that the horizontal rate of closure $r_v$ between the two aircraft is constant. Specifically, in the model in \rref{sec:InfHoriz}, we grant the intruder limited control over this value. This corresponds to the intruder being able to change direction in the horizontal plane during the encounter, represented in \rref{fig:IO} by the $\theta_v$ angle between $r_v$ and $r$. 
	
Second, in the vertical direction, we not only allow the ownship's vertical velocity $v$ to change at any moment, as in previous work, but we also grant the intruder limited control over its own vertical velocity $v_I$ (Sections \ref{sec:InfVert}, \ref{sec:BoundVert}, and \ref{sec:SafeVert}). In all encounters, we assume that the vertical acceleration of the intruder cannot exceed constant $c$ and that of the ownship cannot exceed $a_\text{max}$. Any aircraft will have a rate of vertical acceleration which it cannot exceed due to the physical maneuverability limitations of the aircraft, and it is reasonable to assume for the ownship to have access to this value given the aircraft type of the intruder. 

While these assumptions still limit the possible trajectories of each aircraft about which we will prove safety properties, they are necessary in the modeling and verification process. For instance, while it would be excellent to prove that the ownship can strategically wiggle out of a collision with any aircraft, this is just not possible if the intruder aircraft is strictly more maneuverable than the ownship. Thus, the $c$ constant is necessary to prove meaningful safety properties, even if it limits the types of encounters to which these safety properties apply.

\subsection{Advisories} \label{sec:advisoriesSec}
ACAS~X advisories (except for the Clear-of-Conflict and Multi-Threat Level-Off advisories) have two components: a target velocity $v_{\text{lo}}$ and the direction of the target $w = \pm 1$. For example, the advisory CL1500 specifies that the pilot should achieve a climb rate of at least $1,500$\,ft/min, meaning the target velocity is $1500$ and the direction is upwards ($w=+1$) allowing larger climb rates. For the DNC2000 advisory, the pilot is advised not to climb more than $2,000$\,ft/min. This would make $v_{\text{lo}} = 2000$ and $w = -1$. The $w$ and $v_{\text{lo}}$ values of the ACAS~X advisories are in \rref{tab:1}.

\subsection{Model Overview} \label{sec:modeloverviewSec}
We present a high-level model whose basic structure other models in this article follow. 

\setcounter{modelline}{0}%
\begin{equation}
\label{eq:modelshape}
\begin{aligned}
\text{init} &\,\bigl|\mline{line:shape-init-r} \text{init}(r_p,h_p,w,a_\text{lo},a_\text{max},c) \land R(r, h, v, w,v_{\text{lo}}) \limply\\
\text{advisory} &\,\bigl|\mline{line:shape-adv} \bigl[ \bigl( ((\prandom{w,v_{\text{lo}})} ;~ \ptest{R(r, h, v, w,v_{\text{lo}})} ;~ \text{advisory}:=(w,v_{\text{lo}})) \\
\text{ownship} &\,\bigl|\mline{line:shape-ownship} \phantom{[(} \bigl( \pumod{a_o}{\text{ownship}(\text{advisory})}; \ptest{(-a_{\text{max}} \leq a_o \leq a_{\text{max}})} \bigr)^d \\
\text{intruder} &\,\bigl|\mline{line:shape-intruder} \phantom{[(} \bm{(a_i:=*; \ptest{-c < a_i < c};} \\
\text{motion} &\,\bigl|\mline{line:shape-motion} \phantom{[((} \{ \pevolvein{\D{r}=-r_v,\D{h}=-v,\D{v}=a_o-a_i}{  \text{EDC}(v,v_{\text{lo}},a_o,a_i,a_{\text{lo}})} \} \\
&\,\phantom{\bigl|}\mline{line:shape-innerloopend} \phantom{[(} \bm{)^\ast} \\
\neg\text{NMAC}&\,\bigl|\mline{line:shape-safe} \bigl)^\ast\bigl]\bigl(\lvert r \rvert > r_p \lor \lvert h \rvert > h_p\bigr)
\end{aligned}
\end{equation}

This \dGL formula~\eqref{eq:modelshape} of the shape $P \limply \dbox{\alpha}{Q}$ says that there is a winning strategy for the ownship in the hybrid game $\alpha$ starting in any state satisfying logical formula $P$ to end up in a state satisfying $Q$. The preconditions $P$ ensure both nature-imposed and safety-imposed conditions. Puck radii $r_p$ and $h_p$ must be positive, for instance, as they represent distances, $w=-1 \lor w=1$ since it flags an advisory as either upsense or downsense, and we need relationships between minimum advisory compliant climb rate $a_\text{lo}$, ownship climb rate limits $a_\text{max}$, and intruder climb rate limits $c$. 
We also that the ownship is initially in a safe region $R$ for \emph{some} initial advisory $(w,v_{\text{lo}})$, otherwise we cannot conclude that it will be safe in the future. This is symbolically represented above by the formula $R(r, h, v, w,v_{\text{lo}})$, but this region is both specific to the model being studied and critical to proving safety, and will therefore be developed and explained in great detail in each section. 

The game on lines~\ref{line:shape-adv}--\ref{line:shape-intruder} encodes the sequence of discrete choices made, followed by the evolution of the continuous dynamics on \rref{line:shape-motion}.
Specifically, an advisory is computed ($\prandom{(w,v_\text{lo})}$) and issued on \rref{line:shape-adv} which must satisfy our safe region $\ptest{R(r, h, v,w,v_{\text{lo}})}$, after which the ownship is allowed in \rref{line:shape-ownship} to choose the particular acceleration $a_o$ within the climb rate limits of the aircraft ($-a_{\text{max}}\leq a_o \leq a_{\text{max}}$) that it wants to follow during the encounter. 
This choice of $a_o$ can access the advisory from \rref{line:shape-adv}, but not the specific intruder choice $a_i$ from \rref{line:shape-intruder} after it.
Note that an important switch in coordinates over \rref{tab:1} occurs with respect to $v_\text{lo}$. 
\rref{tab:1} uses $v_\text{lo}$ to refer to the climb rate requested from the ownship pilot, while in all our models $v_\text{lo}$ refers to an advisory in terms of \emph{relative} climb rate; the coordinate transformation to the non-relative advisory is assumed to occur in $\pumod{\text{advisory}}{(w,v_\text{lo})}$ from relative $v_\text{lo}$ and intruder velocity $v_I$ at the time of issuing the advisory.
Then on \rref{line:shape-intruder}, the intruder can change its own control decision $a_i$ within climb rate limits $c$, which we assume in $\text{init}$ to be strict enough so that the ownship can overcome the worst-case intruder maneuvers given its own bounds $a_\text{max}$.
The differential equations of motion combine ownship and intruder acceleration to affect the relative climb rate $v$, and in turn the vertical separation $h$, while the horizontal separation $r$ is affected by the relative horizontal speed $r_v$. 
The differential equations are followed for any duration of time, as long as the evolution domain constraint ($\text{EDC}(v,v_\text{lo},a,a_\text{lo})$) is true.
The evolution domain constraints vary depending on the model and will be discussed in later sections. 
The ${}^\ast$ operator on \rref{line:shape-innerloopend} indicates that the inner loop from \rref{line:shape-intruder} to \rref{line:shape-innerloopend} can be repeated any number of times so that the intruder can change decisions more often than the ownship.
The bold-face intruder choice on \rref{line:shape-intruder} and the inner loop operator on \rref{line:shape-innerloopend} are omitted from models that do not allow the intruder control over its trajectory.

Crucially, the ownship choice on \rref{line:shape-ownship} is contained within the $({}^d)$ operator, which represents the difference in choice between two players so between the ownship and intruder. 
The \dGL formula $P \limply \dbox{\alpha}{Q}$ states that given preconditions $P$, \emph{there is a winning strategy for the ownship} that wins by successfully achieving $Q$ \emph{for all intruder responses} when playing game $\alpha$. 
All choices within the $({}^d)$ game are resolved by our helpful player, so we need only show that there exists some run of this subgame such that for all runs of the game outside the operator, $Q$ is satisfied. Within the context of this model, this means that there need only be one choice of $a_o$ which ultimately allows the ownship to avoid collision.
Crucially, we will prove that for \emph{any} advisory which satisfies our safe region and for \emph{any} set of intruder actions, the pilot \emph{can} strategically pick her acceleration $a_o$ such that an NMAC does not occur.
By contrast, if this choice of $a_o$ were outside the $({}^d)$ operator, the model would be conjecturing that all of the infinitely many choices for ownship acceleration $a_o$ would have to satisfy the postcondition, which it simply does not if the pilot does not pay attention.  

The last ${}^\ast$ operator in \rref{line:shape-safe} is the outer loop around the entire program, which means that the aircraft encounter game can be repeated any number of times. More specifically, the pilot can be given any number of advisories by the aircraft, and our postcondition guarantees that \emph{any sequence} of advisories which satisfy our safe region will guarantees collision freedom at all points in time. 

\subsection{Formalization and Verification Overview}
\label{sec:proofoverview}

The models and theorems in the following sections build upon the general shape \eqref{eq:modelshape} in an incremental fashion. 
An overview of the relationship between the models, definitions of safe regions, and safety theorems is given in \rref{fig:modeloverview}.

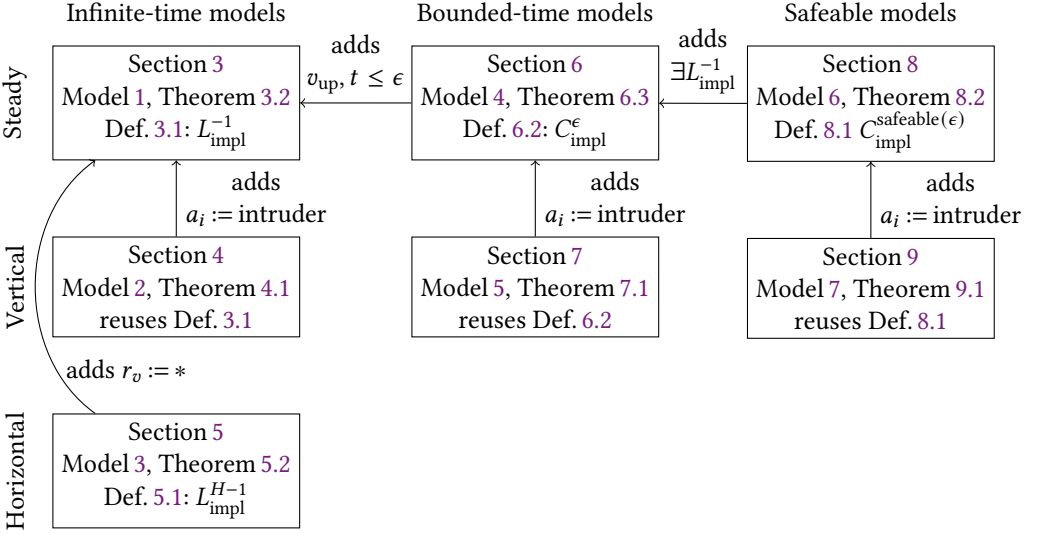
\begin{figure}[htb]
\begin{tikzpicture}[
    modelnode/.style={rectangle, draw=black},
    every text node part/.style={align=center}
]
    \node (infinitetime) {Infinite-time models};    
    \node[modelnode,below=.2cm of infinitetime] (infinitenon) {\rref{sec:InfNon} \\ \rref{model:infinitenon}, \rref{thm:nonmaneuvering} \\ \rref{def:implicitinfinite}: $L^{-1}_\text{impl}$};
    \node[modelnode,below=of infinitenon] (infinitevert) {\rref{sec:InfVert} \\ \rref{model:infinitevert}, \rref{thm:infinitevert} \\ reuses \rref{def:implicitinfinite}};
    \draw[->] (infinitevert) to node[right]{adds \\ $\pumod{a_i}{\text{intruder}}$} (infinitenon);
    \node[modelnode,below=of infinitevert] (infinitehoriz) {\rref{sec:InfHoriz} \\ \rref{model:infinitehoriz}, \rref{thm:implicitinfinitehoriz} \\ \rref{def:implicitinfinitehorizontal}: $L^{H-1}_\text{impl}$};
    \draw[->] (infinitehoriz) to[out=145,in=215] node[pos=0.2,right]{adds $\prandom{r_v}$} (infinitenon);
    \node[left=.5cm of infinitenon,rotate=90,anchor=center] (nonmaneuvering) {Steady};
    \node[left=.5cm of infinitevert,rotate=90,anchor=center] (vertmaneuvering) {Vertical};
    \node[left=.5cm of infinitehoriz,rotate=90,anchor=center] (horizmaneuvering) {Horizontal};
    % bounded-time
    \node[right=1.5cm of infinitetime] (boundedtime) {Bounded-time models};
    \node[modelnode,below=.2cm of boundedtime] (boundednon) {\rref{sec:BoundNon} \\ \rref{model:boundednon}, \rref{thm:bounded-non} \\ \rref{def:twosidedepsilon}: $C^{\epsilon}_\text{impl}$};
    \draw[->] (boundednon) to node[above]{adds \\ $v_\text{up},t \leq \epsilon$} (infinitenon);
    \node[modelnode,below=of boundednon] (boundedvert) {\rref{sec:BoundVert} \\ \rref{model:boundedvert}, \rref{thm:bounded-vert} \\ reuses \rref{def:twosidedepsilon}};    
    \draw[->] (boundedvert) to node[right]{adds \\ $\pumod{a_i}{\text{intruder}}$} (boundednon);    
    % safeable
    \node[right=1.5cm of boundedtime] (safeable) {Safeable models};
    \node[modelnode,below=.2cm of safeable] (safeablenon) {\rref{sec:SafeNon} \\ \rref{model:safeablenon}, \rref{thm:safeablenon} \\ \rref{def:implicitsafeability} $C^{\text{safeable}(\epsilon)}_\text{impl}$};
    \draw[->] (safeablenon) to node[above]{adds \\ $\exists L^{-1}_\text{impl}$} (boundednon);
    \node[modelnode,below=of safeablenon] (safeablevert) {\rref{sec:SafeVert} \\ \rref{model:safeablevert}, \rref{thm:safeablevert} \\ reuses \rref{def:implicitsafeability}};
    \draw[->] (safeablevert) to node[right]{adds \\ $\pumod{a_i}{\text{intruder}}$} (safeablenon);
\end{tikzpicture}
\caption{Overview of the relationship between models, safe regions, and theorems}
\label{fig:modeloverview}
\end{figure}

\section{Infinite-Time Safety for a Non-Maneuvering Intruder} \label{sec:InfNon}

In this section, we establish a baseline model in which the ownship and intruder are approaching each other at a constant horizontal velocity $r_v$. In future sections, the intruder will have control over its vertical acceleration or the horizontal rate of closure to present a greater challenge to collision avoidance, but for now, the model is kept simple to establish a baseline understanding. 

In this section, we define an advisory to be safe only if it is safe indefinitely, meaning that no further advisory is needed to provide long-term safety of the ownship. This is too restrictive for reasons discussed in \rref{sec:BoundNon}, but it allows for a simple model with which to start our investigation.

\subsection{Model} \label{sec:InfNonModel}

\setcounter{modelline}{0}
\begin{model}[tbhp]
\caption{Infinite-time safety for a non-maneuvering intruder}
\label{model:infinitenon}
\begin{align*}
\text{init}     &\,\bigl|\mline{line:infinitenon-init} \rp \geq0\land\hp>0\land r_v\geq0\land a_{\text{lo}}>0\land(w=-1\lor w=1)\land a_{\text{max}} \geq a_{\text{lo}} \land {} \\
\text{R}        &\,\bigl|\mline{line:infinitenon-r} L_{\text{impl}}^{-1}(r,h,v,w,v_{\text{lo}}) \\
                &\,\phantom{\bigl|}\mline{line:infinitenon-imply}\rightarrow  \\
                &\,\phantom{\bigl|}\mline{line:infinitenon-loopstart}\bigl[\bigl( \\
\text{advisory} &\left|
\begin{aligned}
  &\mline{line:infinitenon-adv-keep} \quad \bigl(\phantom{\cup}\phantom{\bigl(} \ptest{\ltrue} \\
  &\mline{line:infinitenon-adv-new} \quad \phantom{\bigl(}\cup (\pchoice{\pumod{w}{1}}{\pumod{w}{-1}});\pumod{v_{\text{lo}}}{\ast}; \ptest{L_{\text{impl}}^{-1}}(r,h,v,w,v_{\text{lo}}); \pumod{\text{advisory}}{(w,v_{\text{lo}})}\bigr); 
 \end{aligned}
 \right.\\
\text{ownship}  &\,\bigl|\mline{line:infinitenon-ownship} \phantom{\bigl(}\quad\pdual{\bigl(\pumod{a_o}{\ast}; \ptest{(-a_{\text{max}} \leq a_o \leq a_{\text{max}})}\bigr)}; \\
\text{motion}   &\left|
\begin{aligned}
    &\mline{line:infinitenon-motion-ode} \phantom{\bigl(} \quad\{\D{r}=-r_v \syssep \D{h}=-v\syssep\D{v}=a_o\}
\end{aligned}
\right.\\
\neg\text{NMAC} &\,\Bigl|\mline{line:infinitenon-safe} \phantom{\Bigl[\,}\bigr)^\ast\bigr]\bigl(\lvert r \rvert > \rp \lor \lvert h \rvert > \hp\bigr)
\end{align*}
\end{model}

The formula $L_{\text{impl}}^{-1}(r, h, v, w, v_{\text{lo}})$ on \rref{line:infinitenon-r} is the safe region of this model: an ownship originally separated from an intruder by $r$ horizontally and $h$ vertically will avoid collision given the advisory $(w,v_{\text{lo}})$. This crucial region will be developed and explained in \rref{sec:InfNonRegions}.
The postcondition on \rref{line:infinitenon-safe} expresses the desire that there must always be a separation of the puck distance between the two aircraft in either the horizontal or vertical direction, so no NMAC ever occurs.

Lines \ref{line:infinitenon-adv-keep}--\ref{line:infinitenon-adv-new} encode the advisory. The nondeterministic operator $\cup$ encodes that the pilot has two options: either to continue with her current advisory, in which case only the $(\ptest{\ltrue})$ condition must be satisfied, or follow a new advisory. The $?$ operator discards any runs of the system in which the condition after the $?$ is false, so $(\ptest{\ltrue})$ is always trivially satisfied. However, this $(\ptest{\ltrue})$ condition is important to ensure that the system always has a valid choice for an advisory (i.e. keep the previous advisory), and will not get stuck without any safe advisories. 

The requirement on \rref{line:infinitenon-adv-new} is that the choice of target velocity and up $(w=1)$ or down $(w=-1)$ advisory will keep the ownship within the safe region indefinitely, as encoded by the $?(L_{\text{impl}}^{-1}(r, h, v, w, v_{\text{lo}}))$ condition, given the minimum acceleration $a_{\text{lo}}$ from \rref{sec:ACASXSec}. Without this assumption of a minimum acceleration, there is no guarantee in how quickly an aircraft would reach its target velocity, and therefore no safety guarantees.

Line~\ref{line:infinitenon-ownship} expresses the pilot's choice of her own vertical acceleration $(\prandom{a_o})$. While this choice at first seems arbitrary in the context of the model, the proof will need to make strategic choices for $a_o$ within the climb rate limits $a_{\text{max}}$ of the aircraft to pass the subsequent test and avoid NMACs.
Crucially, the choice of $a_o$ and test in \rref{line:infinitenon-ownship} are within a $({}^d)$ operator since the choice of $a_o$ and the responsibility of staying within the climb rate limits of the ownship are up to the ownship pilot.

The two aircraft then follow the differential equations on \rref{line:infinitenon-motion-ode}. The differential equations express that $r$, $v$, and $h$ of the ownship evolve according to $r_v$, the rate of horizontal closure, as well as the acceleration $a_o$ chosen by the pilot based on the advisory issued by the system. 

\subsection{Implicit Formulation of the Safe Region} \label{sec:InfNonRegions}
As previously stated, the safety of this model hinges on proving that an aircraft following an advisory from \rref{line:infinitenon-adv-new} will stay within the safe region throughout the encounter with an intruder. This begs the question of what regions of flight guarantee safety for the ownship. Just like the previous work, we represent this region with the formula $L_{\text{impl}}^{-1}(r, h, v, w, v_{\text{lo}})$. This region is fixed at the intruder with its origin at the initial position of the ownship. We will explain one case in detail and provide a generalization of this region after.

\noindent\paragraph{First case:}
Consider the case of an upsense advisory $w = +1$, where the ownship has not yet reached the target velocity $(v \leq v_{\text{lo}})$. 
We use the concept of a nominal trajectory \cite{acasx} (denoted by $\mathcal{N}(t)$), an example of which is shown in \rref{fig:firstnominal}
in red. In this figure, the ownship follows one possible trajectory adhering to all the requirements of ACAS~X. This is just one of the uncountably infinitely many safe trajectories of the ownship. Upon receiving an advisory from ACAS~X, the ownship begins climbing at an acceleration of $a_{\text{lo}}$ and continues climbing along a parabola until it reaches the advised velocity $v_{\text{lo}}$. It then stops climbing and continues at the vertical velocity $v_{\text{lo}}$ in a straight line. Through integration of the equations of motion from \rref{model:infinitenon} \rref{line:infinitenon-motion-ode}, we can get the coordinates $(r_n, h_n)$ of the ownship along this nominal trajectory as a function of time: 
\[
  (r_n,h_n) =
  \begin{cases}
        \left(r_vt, \frac{a_{\text{lo}}}2t^2+vt \right) & \text{if $0 \leq t < \frac{v_{\text{lo}}-v}{a_{\text{lo}}}$} \\
        \left(r_vt,v_{\text{lo}}t - \frac{(v_{\text{lo}}-v)^2}{2a_{\text{lo}}} \right) & \text{if $\frac{v_{\text{lo}}-v}{a_{\text{lo}}} \leq t$} \\
  \end{cases}
\]

While these equations describe the minimally-compliant nominal trajectory, the ownship pilot's choice of $a_o$ in \rref{line:infinitenon-ownship} can be different than $a_{\text{lo}}$ (and may exceed it while necessary). The actual coordinates of the ownship could be anywhere above this safe nominal trajectory, creating a region in which the ownship is guaranteed to be. Thus, the ownship is safe if it is separated horizontally from the nominal trajectory by at least the puck width $(\vert r-r_n\vert > r_p)$, or it is above the nominal trajectory by at least puck height $(h_n - h > h_p)$, that is: 
$$\forall t\, \forall r_n\,\forall h_n\, ((r_n,h_n) \in \mathcal{N}(t) \rightarrow \vert r - r_n \vert > r_p \lor h_n-h >h_p)$$
This will be referred to as the implicit formulation of the safe region. It is an implicitly defined region because it uses quantifiers as opposed to explicit inequalities to define the nominal trajectory. 

\noindent\paragraph{Generalization:}
The same reasoning applies to the $w=-1$ case where the pilot is told to descend to avoid collision as well as the $v > v_{\text{lo}}$ case, where the ownship has already achieved the target velocity and is now following the straight-line trajectory. 

Going back to \rref{model:infinitenon}, the test $\ptest{L_{\text{impl}}^{-1}(r,h,v,w,v_{\mathit{lo}})}$ on \rref{line:infinitenon-adv-new} guarantees that following the nominal trajectory keeps the ownship safe, and thus we know that if the pilot accelerates at least as fast as the minimum acceleration $a_{\text{lo}}$ or has already reached the target velocity, then she is above the nominal trajectory and is therefore safe as well. The test also allows the ACAS~X system the flexibility to give \emph{any} advisory which results in a safe nominal trajectory, and the pilot the flexibility to choose arbitrary accelerations which keep the plane in the implicit region. This safe region is used to prove the safety postcondition of \rref{model:infinitenon}, making it sufficient to reason about this region to guarantee the safety of the ownship. The implicit formulation $L_{\text{impl}}^{-1}(r, h, v, w, v_{\text{lo}})$ follows \cite[Fig. 3]{acasx}, is listed in \rref{def:implicitinfinite}, and used in \rref{thm:nonmaneuvering}, which has been verified using \KeYmaeraX.

\begin{definition}[Implicit Infinite-Time Safe Region]
\label{def:implicitinfinite}
\begin{align*}
    T_{\text{lo}}(v,w,v_{\text{lo}}) &\equiv \frac{\text{max}(0,w(v_{\text{lo}}-v))}{a_{\text{lo}}}\\
    A_{\text{lo}}(v,w,v_{\text{lo}},h_n,t) &\equiv \biggl( 0 \leq t < T_{\text{lo}}(v,w,v_{\text{lo}}) \land h_n = \frac{wa_{\text{lo}}}2t^2 + vt\biggr)\\
    & \phantom{\equiv} \lor \biggl( t \geq T_{\text{lo}}(v,w,v_{\text{lo}}) \land h_n = v_{\text{lo}}t - \frac{w\text{max}(0,w(v_{\text{lo}}-v))^2}{2a_{\text{lo}}} \biggr)   \\
    L_{\text{impl}}^{-1}(r,h,v,w,v_{\text{lo}}) &\equiv \forall t \, \forall r_n \, \forall h_n \left( r_n = r_vt \land A_{\mathit{lo}}(v,w,v_{\text{lo}},h_n,t) \rightarrow (\vert r-r_n\vert >r_p \lor w(h_n-h) > h_p )\right)
\end{align*}
\end{definition}

\begin{theorem}[Non-maneuvering intruder: correctness of implicit safe regions] 
\label{thm:nonmaneuvering}
The \dGL formula given in \rref{model:infinitenon} is valid. That is as long as the advisories followed obey formula $L_{\text{impl}}^{-1}(r,h,v,w,v_{\text{lo}})$ from \rref{def:implicitinfinite} the winning strategy will avoid NMAC.
\end{theorem}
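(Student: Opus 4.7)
The plan is to prove the \dGL formula by loop induction on the outer ${}^\ast$ of \rref{model:infinitenon} with the safe region itself as the invariant: $\text{Inv} \equiv L_{\text{impl}}^{-1}(r,h,v,w,v_{\text{lo}}) \land C$, where $C$ abbreviates the preserved side conditions from \rref{line:infinitenon-init} on $r_p, h_p, r_v, a_{\text{lo}}, a_{\text{max}}$, and $w$. The base case is immediate from the precondition on \rref{line:infinitenon-r}. The postcondition on \rref{line:infinitenon-safe} falls out of the invariant by instantiating the universally quantified nominal time at $t=0$: both branches of $A_{\text{lo}}$ force $r_n = 0 = h_n$, so $L_{\text{impl}}^{-1}$ reduces to $|r| > r_p \lor -wh > h_p$; since $h_p > 0$ and $w=\pm 1$, the right disjunct implies $|h| > h_p$.

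For the inductive step I would walk through the three fragments of the loop body in order. The advisory fragment on \rref{line:infinitenon-adv-keep}--\ref{line:infinitenon-adv-new} is straightforward: the keep branch preserves $L_{\text{impl}}^{-1}$ by assumption, and the new-advisory branch carries a test $\ptest{L_{\text{impl}}^{-1}(r,h,v,w,v_{\text{lo}})}$ that directly reestablishes the invariant for the freshly chosen $(w,v_{\text{lo}})$. The ownship's assignment on \rref{line:infinitenon-ownship} sits under a dual ${}^d$, so as the witnessing player I am free to commit to the nominal strategy: $a_o := w\,a_{\text{lo}}$ whenever $w(v_{\text{lo}}-v) > 0$, and $a_o := 0$ otherwise. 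Both values lie in $[-a_{\text{max}}, a_{\text{max}}]$ by $a_{\text{max}} \geq a_{\text{lo}} > 0$, discharging the test guard.

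The substantive obligation is that the ODE $\dot r = -r_v,\ \dot h = -v,\ \dot v = a_o$ on \rref{line:infinitenon-motion-ode} preserves $L_{\text{impl}}^{-1}$ under the strategic $a_o$ above. The cleanest route is a time-shift argument: under the nominal strategy the actual state trajectory $(r(\tau), h(\tau), v(\tau))$ is itself a suffix of the nominal curve starting from the current state, so if the separation property in $L_{\text{impl}}^{-1}$ held at $\tau = 0$ against all nominal-time offsets $t \geq 0$, then at time $\tau$ it still holds for all $t \geq 0$ after reparameterizing $t' = t - \tau$, using $T_{\text{lo}}(v(\tau),w,v_{\text{lo}}) = \max(0, T_{\text{lo}}(v(0),w,v_{\text{lo}}) - \tau)$ to bridge the parabola-line interface of $A_{\text{lo}}$. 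Operationally in \KeYmaeraX, the ODE has polynomial closed-form solutions, so the invariance can be discharged by solving explicitly, substituting into the quantified body, and invoking real arithmetic.

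The main obstacle is precisely this ODE invariance: $L_{\text{impl}}^{-1}$ nests three universal quantifiers over the piecewise curve $A_{\text{lo}}$ (parabola before $T_{\text{lo}}$, line after), so the proof must case split on whether $v$ has already crossed $v_{\text{lo}}$ in direction $w$, on which of the two branches of $A_{\text{lo}}$ is being matched against, and must thread the $\max$ in $T_{\text{lo}}$ consistently across the parabola-line transition. Once that invariance is in place, everything else reduces to quantifier-free polynomial arithmetic in range of \KeYmaeraX's real-arithmetic decision procedure.
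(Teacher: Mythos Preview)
Your proposal is correct and follows the same approach as the paper: the paper's proof also uses $L_{\text{impl}}^{-1}$ as the loop invariant and commits to the identical nominal winning strategy $a_o = wa_{\text{lo}}$ when $wv < wv_{\text{lo}}$ (your condition $w(v_{\text{lo}}-v)>0$ is equivalent) and $a_o = 0$ otherwise. Your writeup is considerably more detailed than the paper's terse sketch---in particular the $t=0$ instantiation for the postcondition and the time-shift argument for ODE invariance are exactly the right ideas, and the paper simply defers those to the \KeYmaeraX{} mechanization.
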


\begin{proof}
The \KeYmaeraX proof develops a winning strategy for choosing ownship control $a_o$:
\begin{equation*}
a_o = \begin{cases}
      wa_{\text{lo}} & \text{if}~ wv < wv_{\text{lo}}\\
      0 & \text{if}~ wv \geq wv_{\text{lo}}
\end{cases}
\end{equation*}
We use minimal vertical acceleration $a_\text{lo}$ in direction $w$ to adjust the climb rate towards the advisory.
When the advisory is met (when $wv \geq wv_{\text{lo}}$), we keep the climb rate steady by picking $a_o=0$.
\end{proof}

\section{Infinite-Time Safety for a Vertically-Maneuvering Intruder} \label{sec:InfVert}
Now that we have established a baseline model for this encounter which includes the important $({}^d)$ duality operator for player selection in hybrid games, we can continue on to more expressive models. Our next model accounts for vertical intruder maneuvers, and the model expresses the notion that the ownship should always have a way to overcome any reasonable intruder action. In the example shown in \rref{fig:intruderaccel}, even though the intruder begins accelerating in the same direction as the pilot, the pilot can still overcome the intruder action and navigate to safety.

\begin{figure}[tbhp]
    \centering
    \includegraphics[scale=\imagescale]{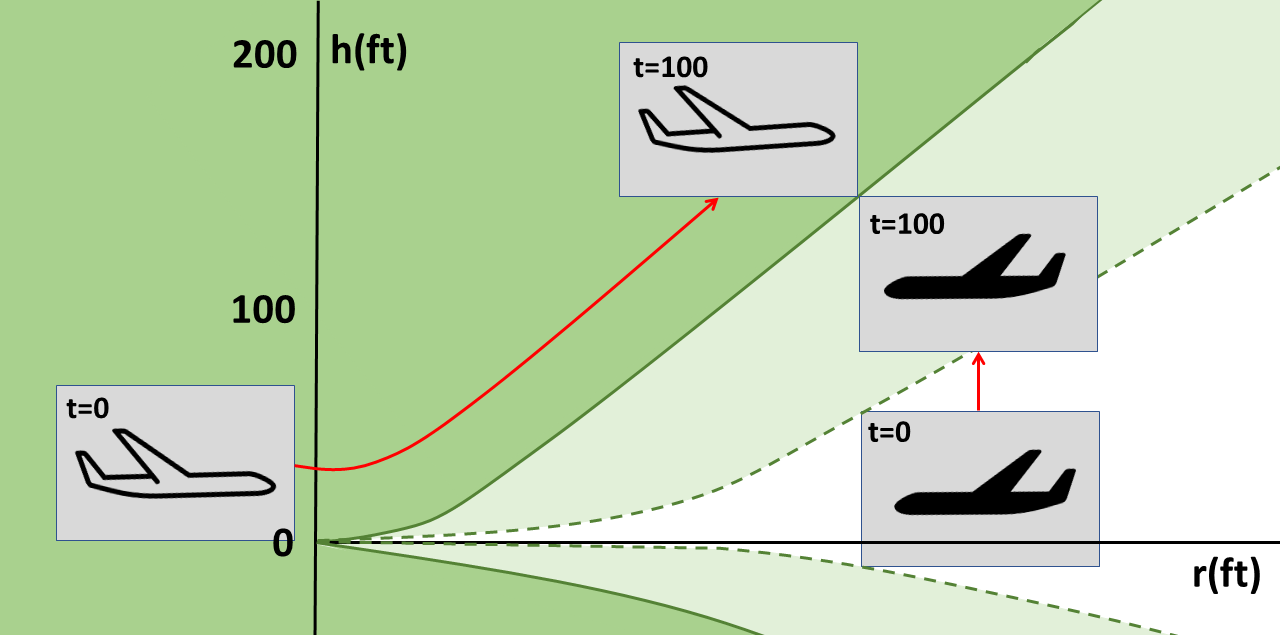}
    \Description[Nominal trajectory for an encounter with a vertically maneuvering intruder.]{Illustrates the nominal trajectory and safe region of an ownship accelerating towards an upsense advisory, where the intruder is accelerating in the same direction. The safe region for an encounter with a maneuvering intruder is smaller than the safe region for an encounter with a non-maneuvering intruder.}
    \caption{Nominal trajectory (red) and safe region (dark green) of an ownship accelerating towards an upsense advisory, where the intruder is accelerating in the same direction. The larger region that would be safe \emph{without} intruder maneuverability is shown in light green. }
    \label{fig:intruderaccel}
\end{figure}

\subsection{Model} \label{sec:InfVertModel}

\setcounter{modelline}{0}
\begin{model}[tbhp]
\caption{Infinite-time safety for a vertically-maneuvering intruder}
\label{model:infinitevert}
\begin{align*} 
\text{init}     &\,\bigl|\mline{line:infinitevert-init} \rp \geq0\land\hp>0\land r_v\geq0\land a_{\text{lo}}>0\land \bm{c > 0} \,\land(w=-1\lor w=1)\land \bm{a_\textbf{max} \geq a_\textbf{lo}+c} \\
\text{R}        &\,\bigl|\mline{line:infinitevert-r} L_{\text{impl}}^{-1}(r,h,v,w,v_{\text{lo}}) \\
                &\,\phantom{\bigl|}\mline{line:infinitevert-imply} \limply  \\
                &\,\phantom{\bigl|}\mline{line:infinitevert-loopstart} \bigl[\bigl( \\
\text{advisory} &\left|
\begin{aligned}
  &\mline{line:infinitevert-adv-keep}\phantom{\bigl(}\quad\bigl(\phantom{\cup}\phantom{\bigl(} \ptest{\ltrue} \\
  &\mline{line:infinitevert-adv-new}\phantom{\bigl(}\quad\phantom{\bigl(}\cup (\pchoice{\pumod{w}{1}}{\pumod{w}{-1}}); \pumod{v_{\text{lo}}}{\ast}; \ptest{L_{\text{impl}}^{-1}(r,h,v,w,v_{\text{lo}})}; \pumod{\text{advisory}}{(w,v_{\text{lo}})}\bigr); 
 \end{aligned}
 \right.\\
\text{ownship}  &\,\bigl|\mline{line:infinitevert-ownship}\phantom{\bigl(}\quad\pdual{\bigl(\pumod{a_o}{\ast}; \ptest{(-a_\text{max} \leq a_o \leq a_\text{max}})\bigr)}; \\
\text{intruder}  &\,\bigl|\mline{line:infinitevert-intruder}\phantom{\bigl(}\quad\bm{\bigl(
a_{i} := *; \,?(-c < a_{i} < c);} \\
\text{motion}   &\left|
\begin{aligned}
    &\mline{line:infinitevert-motion-ode}\phantom{\bigl(}\phantom{(}\quad\{\D{r}=-r_v \syssep \D{h}=-v\syssep\bm{v'=a_o-a_i}\}
\end{aligned}
\right.\\
&\,\phantom{\bigl|}\mline{line:infinitevert-innerloopend}\phantom{\bigl(}\quad\bm{\bigr)^\ast}  \\
\neg\text{NMAC} &\,\Bigl|\mline{line:infinitevert-safe} \phantom{\Bigl[\,}\bigr)^\ast\bigr]\bigl(\lvert r \rvert > \rp \lor \lvert h \rvert > \hp\bigr)
\end{align*}
\end{model}

In this section, the setup is very similar to the last. The intruder and ownship still approach one another with a constant horizontal rate of closure $r_v$, and we have the same relative coordinate system with $r$ for relative horizontal distance between the aircraft and $h$ for relative vertical distance. 

The difference in this model, highlighted in \textbf{bold}, is that we add the variable $a_i$ representing the intruder's vertical acceleration, as well as a constant $c$ which represents the maximum magnitude of intruder acceleration, discussed previously in \rref{sec:dynamicsSec}. On \rref{line:infinitevert-intruder}, the intruder is now able to nondeterministically change its acceleration $a_i$ within $-c$ and $c$. The differential equations on \rref{line:infinitevert-motion-ode} have also been modified to incorporate the new dynamics of the intruder. The rate of vertical closure between the two aircraft $v$ now evolves at a rate of $(a_o-a_i)$ to reflect that both the intruder and ownship acceleration affect the rate of vertical closure. We choose to treat the rate of vertical closure $v$ as a relative rate as opposed to creating two separate absolute vertical velocities in order to minimize the number of variables needed in this model even if $a_o$ and $a_i$ are independent. 

We add a loop around the intruder choice of acceleration $a_i$ and the dynamics on lines \ref{line:infinitevert-intruder}--\ref{line:infinitevert-innerloopend}. The loop contains the intruder choice of acceleration, but not the ownship's choice, meaning the intruder could change its acceleration \emph{any} number of times during an interaction without the ownship being able to react. Recall that ACAS~X tracks the position and velocity of the intruder, but not its acceleration \cite{FAA1}, so we cannot assume that the ownship has constant knowledge of the intruder's acceleration. Therefore, the ownship must first make a choice in acceleration and stick with it while the intruder is more powerful and allowed to change its acceleration arbitrarily often.

This model comes with new strategic insights that are developed in the proof.
The strategy chooses a \emph{relative} rate of vertical separation (not the absolute ownship velocity) as the target velocity, or that the ownship is accelerating at least at the minimum velocity $a_{\text{lo}}$ plus the maximum intruder acceleration $c$ as a strategy to overcome any possible intruder action. Since the ownship has access to the intruder's velocity and position, it can reasonably monitor the rate of vertical separation. However, since the ownship does not have continuous access to intruder acceleration, it must choose an acceleration which can withstand changes to the intruder's acceleration throughout the encounter. Therefore, if the ownship compensates by accelerating at least $a_{\text{lo}}+c$ then the intruder will not be able to catch the ownship during the encounter, even in the worst cases of the intruder accelerating at $c$ or $-c$, because the relative acceleration difference will be at least $a_{\text{lo}}$. 

\begin{figure}[tbhp]
    \centering
    \includegraphics[scale=\imagescale]{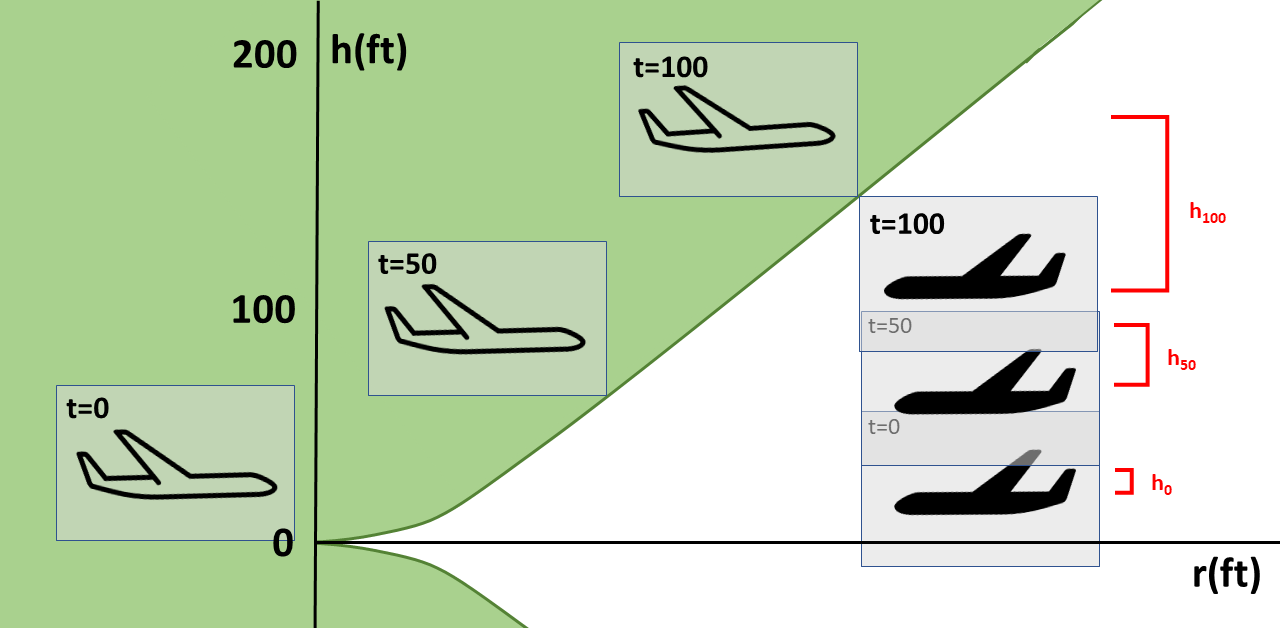}
    \Description[Absolute altitudes and relative separation for a vertically maneuvering intruder.]{Absolute altitudes and relative separation in an encounter between the ownship and a vertically-maneuvering intruder.}
    \caption{An encounter between ownship and vertically-maneuvering intruder showing absolute altitudes of each aircraft and relative separation in red.}
    \label{fig:aloplusc}
\end{figure}

\begin{figure}[tbhp]
    \centering
    \includegraphics[scale=\imagescale]{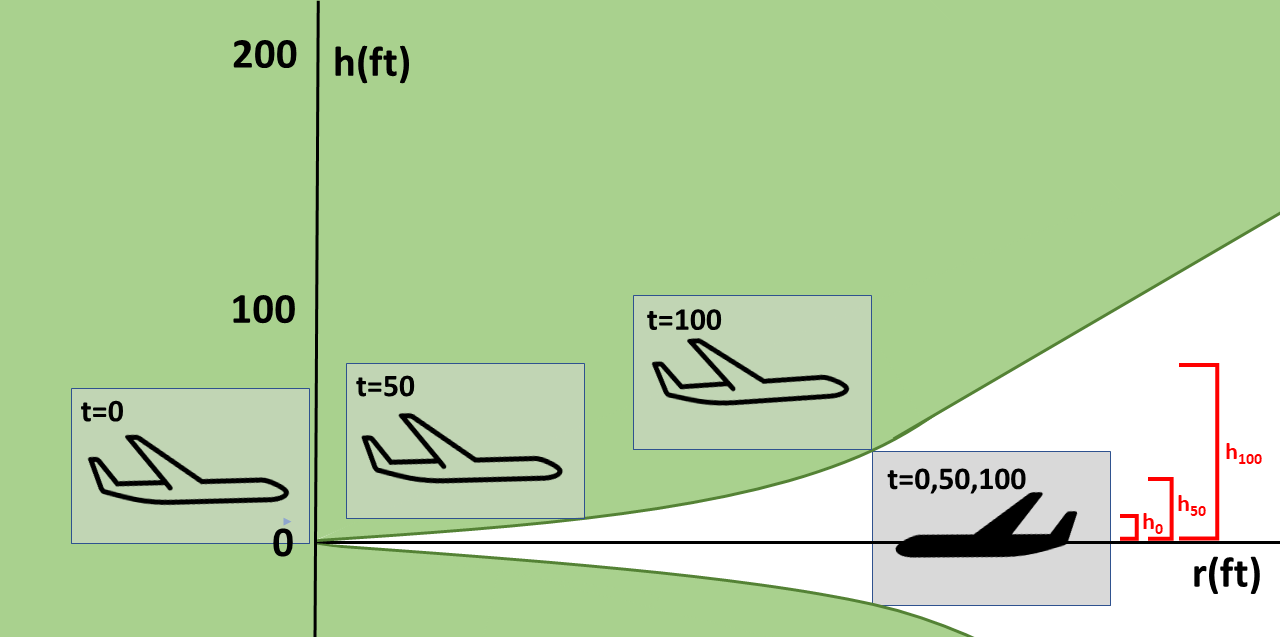}
    \Description[Absolute altitudes and relative separation for a non-maneuvering intruder.]{Absolute altitudes and relative separation in an encounter between the ownship and a non-maneuvering intruder.}
    \caption{An encounter between ownship and non-maneuvering intruder showing absolute altitudes of each aircraft and relative separation in red.}
    \label{fig:alo}
\end{figure}

Figure~\ref{fig:aloplusc} shows an ownship following this new strategy from \rref{thm:infinitevert} with an intruder accelerating vertically at maximum climb rate $c$, and \rref{fig:alo} shows the ownship following the strategy from \rref{thm:nonmaneuvering} with a non-maneuvering intruder. 
Note that the relative separation in the two figures evolves in the same way, since the ownship in \rref{fig:aloplusc} compensates the intruder maneuver by choosing a climb rate that accounts for the intruder maximum climb rate $c$.

\subsection{Implicit Formulation of the Safe Region} \label{sec:InfVertRegion}
Despite the new collision avoidance strategy, we do not need to make any changes to the implicit safe region formulation. This is because the intruder dynamics are hidden by the variables $h$ and $v$ which represent \emph{relative} vertical separation and rate of vertical closure. When the ownship strategically accelerates upwards with at least acceleration $a_{\text{lo}}+c$ until reaching a vertical rate of separation of at least $v_{\text{lo}}$, since the intruder cannot accelerate more than $c$, the relative rate of vertical acceleration is still at least $a_{\text{lo}}$. Therefore, our previous safe region $L_{\text{impl}}^{-1}$ from  \rref{def:implicitinfinite} still applies with the coordinate system still fixed at the intruder. 

\begin{theorem}[Vertically maneuvering intruder: Correctness of implicit safe regions] \label{thm:infinitevert}
The \dGL formula given in \rref{model:infinitevert} is valid. That is as long as the advisories followed obey formula $L_{\text{impl}}^{-1}(r,h,v,w,v_\text{lo})$ from \rref{def:implicitinfinite}, the winning strategy will avoid NMAC.
\end{theorem}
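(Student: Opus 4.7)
The plan is to adapt the structure of the proof of \rref{thm:nonmaneuvering} essentially wholesale: use loop induction on the outer game loop with $L_{\text{impl}}^{-1}(r,h,v,w,v_\text{lo})$ as the invariant, note that it implies the postcondition by instantiating the universal quantifier with $t=0$, and then handle one round of the game. The advisory choice on \rref{line:infinitevert-adv-keep}--\ref{line:infinitevert-adv-new} preserves the invariant trivially or via the explicit test. The crucial insight that lets us reuse the \emph{same} safe region from \rref{def:implicitinfinite} is that $L_{\text{impl}}^{-1}$ is stated in the relative coordinates $(r,h,v)$, so the intruder's extra control over its own acceleration surfaces only through the modified dynamics $\D{v} = a_o - a_i$, not through a new region.

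For the ownship's dual choice of $a_o$ on \rref{line:infinitevert-ownship}, I would pick the Skolem witness
\begin{equation*}
a_o = \begin{cases} w(a_\text{lo}+c) & \text{if } wv < wv_\text{lo} \\ 0 & \text{if } wv \geq wv_\text{lo} \end{cases}
\end{equation*}
mirroring the strategy of \rref{thm:nonmaneuvering} but inflated by $c$ in the climbing phase to absorb the intruder's worst-case maneuver. The bounds test $-a_\text{max} \leq a_o \leq a_\text{max}$ discharges via the strengthened initial assumption $a_\text{max} \geq a_\text{lo}+c$ on \rref{line:infinitevert-init}. The reason this Skolem witness works is the arithmetic fact $w(a_o - a_i) = (a_\text{lo}+c) - wa_i > a_\text{lo}$ whenever $-c<a_i<c$ and we are in the accelerating branch, so the \emph{relative} vertical dynamics dominate the nominal-trajectory acceleration $a_\text{lo}$ that underlies \rref{def:implicitinfinite}.

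For the inner intruder loop on \rref{line:infinitevert-intruder}--\ref{line:infinitevert-innerloopend}, I would again take $L_{\text{impl}}^{-1}$ as the loop invariant. Each iteration fixes a constant $a_i$, so the ODE on \rref{line:infinitevert-motion-ode} with constant coefficients can be discharged by the same differential-invariant/differential-cut reasoning used in \rref{thm:nonmaneuvering}: introduce the nominal trajectory as a cut, show that actual $(r,h,v)$ stays in the relation $w(h_n-h) > h_p$ whenever $|r-r_n| \leq r_p$, and close by real arithmetic. The strictness $-c<a_i<c$ is what gives the strict inequality $w(a_o-a_i) > a_\text{lo}$ needed to keep the actual trajectory strictly above the $a_\text{lo}$-nominal over the entire flow.

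The main obstacle is managing the steady-state subcase $wv \geq wv_\text{lo}$ with the witness $a_o = 0$: here the intruder can attempt to push the relative velocity back below $v_\text{lo}$ via successive inner-loop iterations, so the safe region's $T_\text{lo}$ threshold reawakens a nontrivial climb phase. I expect this to be resolvable because the invariant constraint is a universal over \emph{all} future nominal times, and the intruder's degradation is bounded by $c$ per unit time; but cleanly formalizing this may require either strengthening the strategy (e.g., $a_o = wc$ in the steady branch so $w(a_o-a_i)>0$ forces $wv$ monotonically non-decreasing) or cutting in an auxiliary invariant $wv \geq wv_\text{lo}$ alongside $L_{\text{impl}}^{-1}$ so that the stricter branch of $A_\text{lo}$ is never activated. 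Everything else is a direct lift of the \rref{thm:nonmaneuvering} proof, with the bold-faced additions in \rref{model:infinitevert} handled purely by the $a_\text{lo}\mapsto a_\text{lo}+c$ substitution in the witness.
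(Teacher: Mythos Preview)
Your overall architecture matches the paper: outer-loop invariant $L_{\text{impl}}^{-1}$, postcondition by $t=0$ instantiation, advisory branch by test, and a compensated witness $a_o = w(a_\text{lo}+c)$ in the accelerating case. You also correctly diagnose that your steady-state witness $a_o=0$ is the weak point. The paper resolves exactly this issue, and its fix is \emph{both} of your proposed alternatives simultaneously, not one or the other: the actual witness is
\[
a_o = \begin{cases} w(a_\text{lo}+c) & wv < wv_\text{lo} \\ wc & wv \geq wv_\text{lo} \end{cases}
\]
and, in the steady branch, the inner-loop invariant is strengthened to $L_{\text{impl}}^{-1} \land wv \geq wv_\text{lo}$.

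Your ``either/or'' framing hides a genuine gap. The second option alone (keep $a_o=0$, cut in $wv \geq wv_\text{lo}$) fails outright: with $a_o=0$ the intruder picks $a_i$ near $wc$ and drives $w(a_o-a_i)<0$, so $wv \geq wv_\text{lo}$ is simply not invariant. The first option alone (switch to $a_o=wc$, keep only $L_{\text{impl}}^{-1}$ as inner invariant) also does not close: after several inner iterations you only carry $L_{\text{impl}}^{-1}$ and $a_o=wc$, but the relative acceleration $w(a_o-a_i)=c-wa_i$ can be arbitrarily close to $0$ and in particular smaller than $a_\text{lo}$, so you cannot beat the parabolic branch of $A_\text{lo}$ unless you separately know that branch is inactive, i.e., that $wv \geq wv_\text{lo}$ still holds. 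The witness $a_o=wc$ is what \emph{makes} $wv \geq wv_\text{lo}$ invariant; the cut is what lets you \emph{use} it to pin $A_\text{lo}$ to its linear branch. Commit to both and the rest of your sketch goes through as written.
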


\begin{proof}
The \KeYmaeraX proof develops a winning strategy for choosing ownship control $a_o$:
\begin{equation*}
a_o = \begin{cases}
      w(a_{\text{lo}}+c) & \text{if}~ wv < wv_{\text{lo}}\\
      wc & \text{if}~ wv \geq wv_{\text{lo}}
\end{cases}
\end{equation*}
We compensate intruder maneuvers with increased minimal vertical acceleration $a_{\text{lo}}+c$ in direction $w$ to adjust the climb rate towards the advisory.
When the ownship follows the issued advisory (when $wv \geq wv_{\text{lo}}$), we pick $a_o=wc$ to compensate intruder maneuvers and maintain following the advisory.
The safe region $L_{\text{impl}}^{-1}(r,h,v,w,v_\text{lo})$ serves as a loop invariant of the outer loop of \rref{model:infinitevert}. 
Since the ownship control choice $a_o=wc$ in the winning strategy is only safe when the ownship already follows the issued advisory, in this case we additionally show that following the advisory ($wv \geq wv_\text{lo}$) is an invariant of the inner loop.
\end{proof}

\section{Infinite-Time Safety for a Horizontally-Maneuvering Intruder} \label{sec:InfHoriz}
Where \rref{model:infinitevert} expands on \rref{model:infinitenon} by granting the intruder limited control over its \emph{vertical} velocity, \rref{model:infinitehoriz} in this section grants the intruder limited control over the \emph{horizontal} rate of closure. 
The intruder can increase its ground velocity towards the ownship, as well as change the $\theta_v$ angle shown in \rref{fig:IO} in order to alter the rate of closure between the two aircraft. 

\subsection{Model} \label{sec:InfHorizModel}

\setcounter{modelline}{0}
\begin{model}[tbhp]
\caption{Infinite-time safety for a horizontally-maneuvering intruder}
\label{model:infinitehoriz}
\begin{align*} 
\text{init}     &\,\bigl|\mline{line:infinitehoriz-init} \rp \geq0\land\hp>0\land r_v\geq0\land a_{\text{lo}}>0\land \bm{v_\textbf{max} > 0} \,\land(w=-1\lor w=1)\land {} \\
\text{R}        &\,\bigl|\mline{line:infinitehoriz-r}  \bm{L_{\textbf{impl}}^{H-1}(r,h,v,w,v_{\textbf{lo}})} \\
                &\,\phantom{\bigl|}\mline{line:infinitehoriz-imply} \limply  \\
                &\,\phantom{\bigl|}\mline{line:infinitehoriz-loopstart} \bigl[\bigl( \\
\text{advisory} &\left|
\begin{aligned}
  &\mline{line:infinitehoriz-adv-keep}\phantom{\bigl(}\quad\bigl(\phantom{\cup}\phantom{\bigl(} \ptest{\ltrue} \\
  &\mline{line:infinitehoriz-adv-new}\phantom{\bigl(}\quad\phantom{\bigl(}\cup (\pchoice{\pumod{w}{1}}{\pumod{w}{-1}});\prandom{v_{\text{lo}}}; \bm{\ptest{L_{\textbf{impl}}^{H-1}(r,h,v,w,v_{\textbf{lo}})}}; \pumod{\text{advisory}}{(w,v_{\text{lo}})}\bigr); 
 \end{aligned}
 \right.\\
\text{ownship}  &\,\bigl|\mline{line:infinitehoriz-ownship}\phantom{\bigl(}\quad\pdual{(\prandom{a_o}; \ptest{(-a_\text{max} \leq a_o \leq a_\text{max})})}; \\
\text{intruder}  &\,\bigl|\mline{line:infinitehoriz-intruder}\phantom{\bigl(}\quad
\bigl(\bm{r_v := *; \,?(0 \leq r_v \leq v_\textbf{max});} \\
\text{motion}   &\left|
\begin{aligned}
    &\mline{line:infinitehoriz-motion-ode}\phantom{\bigl(}\phantom{(}\quad\{\D{r}=-r_v \syssep \D{h}=-v\syssep \D{v}=a_o \} 
\end{aligned}
\right.\\
                &\,\phantom{\bigl|}\mline{line:infinitehoriz-innerloopend}\phantom{\bigl(}\quad\bigr)^\ast  \\
\neg\text{NMAC} &\,\Bigl|\mline{line:infinitehoriz-safe} \phantom{\Bigl[\,}\bigr)^\ast\bigr](\lvert r \rvert > \rp \lor \lvert h \rvert > \hp)
\end{align*}
\end{model}

In \rref{model:infinitehoriz}, we give control of the horizontal rate of closure $r_v$ to the intruder.
We do not assume horizontal maneuverability for the ownship, but the intruder has sole control over the horizontal rate of closure, and can therefore nondeterministically choose $r_v$ on \rref{line:infinitehoriz-intruder}. 
This requires the new safe region $L_{\text{impl}}^{H-1}$ on \rref{line:infinitehoriz-adv-new}. Just as in the previous section, we must assume some upper limit on the intruder maneuverability; therefore, we introduce the constant $v_\text{max}$ on \rref{line:infinitehoriz-intruder} to represent the maximum possible horizontal rate of closure between the two aircraft. For simplicity, we also assume that the intruder's maneuvers cannot invert the rate of closure, meaning the intruder cannot fully turn around during the encounter $(\theta_v \in [90^{\circ}, 270^{\circ}])$, and $r_v$ must be non-negative. 
This assumption ensures that the intruder is not fully adversarial and will not ascend or descend along a helix, or turn around to chase the ownship.

Besides these changes, the dynamics mirror that of \rref{model:infinitenon}. 
Again, the loop around the intruder choice and the dynamics allows the intruder to change the rate of closure as many times as it pleases during the encounter, but the ownship must stick with its initial choice of acceleration.

\subsection{Implicit Formulation of the Safe Region} \label{sec:InfHorizRegions}
The safe region of this model must take into consideration the variable rate of closure $r_v$, which could vary anywhere from $0$ to $v_\text{max}$. As such, we no longer have a single nominal trajectory, but infinitely many nominal trajectories $\mathcal{N}_{r_v}(t)$ which are a function of the horizontal rate of closure $r_v$ chosen by the intruder. Each nominal trajectory $\mathcal{N}(r_v)$ is the same as in Sections \ref{sec:InfNon} and \ref{sec:InfVert}:
\[
  (r_n,h_n) =
  \begin{cases}
        \left(r_vt, \frac{a_{\text{lo}}}2t^2+vt \right) & \text{if $0 \leq t < \frac{v_{\text{lo}}-v}{a_{\text{lo}}}$} \\
        \left(r_vt,v_{\text{lo}}t - \frac{(v_{\text{lo}}-v)^2}{2a_{\text{lo}}} \right) & \text{if $\frac{v_{\text{lo}}-v}{a_{\text{lo}}} \leq t$} \\
  \end{cases}
\]

However, in order for our region to be safe, we must know that \emph{each} nominal trajectory whose $r_v$ is within $[0,v_\text{max}]$ keeps a safe distance from the intruder at all future points in time. This motivates universally quantifying over all possible choices of $r_v$ in our new region:  
\[\forall t \forall r_v \forall r_n \forall h_n (r_v \in [0,v_\text{max}] \land (r_n,h_n) \in \mathcal{N}_{r_v}(t) \rightarrow \vert r -r_n \vert > r_p \lor h_n - h > h_p)\]
If this region holds for some advisory $v_{\text{lo}}$, we know that for any possible horizontal rate of closure chosen by the intruder, if the ownship obeys the advisory then it will avoid an NMAC. The implicit formulation $L_{\text{impl}}^{H-1}(r, h, v, w, v_{\text{lo}})$ is shown in \rref{def:implicitinfinitehorizontal} and used in \rref{thm:implicitinfinitehoriz}, which has been verified to be safe using \KeYmaeraX.

\begin{definition}[Lower-bounded, infinite time safe region with horizontally-maneuvering intruder]
\label{def:implicitinfinitehorizontal}
\begin{align*}
    L_{\text{impl}}^{H-1}(r,h,v,w,v_\text{lo}) &\equiv \forall t \, \forall r_v \, \forall r_n \, \forall h_n \bigl( r_v \in [0,\text{max}_v] \land  r_n = r_vt \land A_{\text{lo}}(v,w,v_\text{lo},h_n,t) \\ 
    & \phantom{\equiv \forall t \, \forall r_v \, \forall r_n \, \forall h_n ( r_v \in [0,\text{max}_v] \land} \limply (\vert r-r_n\vert >r_p \lor w(h_n-h) > h_p )\bigr)
\end{align*}
with $A_{\text{lo}}(v,w,v_\text{lo},h_n,t)$ and $T_{\text{lo}}(v,w,v_\text{lo})$ as per \rref{def:implicitinfinite}.
\end{definition}

\begin{theorem}[Horizontally-maneuvering intruder: correctness of implicit safe regions]
\label{thm:implicitinfinitehoriz}
The \dGL formula given in \rref{model:infinitehoriz} is valid. That is as long as the advisories followed obey formula $L_{\mathrm{impl}}^{H-1}(r,h,v,w,v_\mathrm{lo})$ of \rref{def:implicitinfinitehorizontal}, the winning strategy will avoid NMAC.
\end{theorem}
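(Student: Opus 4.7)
The plan is to mirror the proof of Theorem~\ref{thm:nonmaneuvering}, since the vertical dynamics of Model~\ref{model:infinitehoriz} coincide with those of Model~\ref{model:infinitenon}: no intruder vertical maneuvers are introduced. Accordingly, I would reuse the ownship strategy
\[
a_o = \begin{cases} w\,a_{\text{lo}} & \text{if } wv < wv_{\text{lo}} \\ 0 & \text{if } wv \geq wv_{\text{lo}} \end{cases}
\]
for the dual choice on line~\ref{line:infinitehoriz-ownship}, which stays within $[-a_{\text{max}}, a_{\text{max}}]$ under the $\text{init}$ preconditions. Take $L_{\text{impl}}^{H-1}(r,h,v,w,v_{\text{lo}})$ as the invariant of the outer loop; it implies the postcondition by instantiating its body at $t = 0$ and any $r_v \in [0, v_{\text{max}}]$, which forces $r_n = h_n = 0$ and collapses the disjunction to $|r| > r_p \lor w(-h) > h_p$, equivalent to $|r| > r_p \lor |h| > h_p$ for both $w \in \{-1,+1\}$.

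For the outer loop step, the new-advisory branch on line~\ref{line:infinitehoriz-adv-new} re-establishes $L_{\text{impl}}^{H-1}$ via its explicit test, and the keep-advisory branch on line~\ref{line:infinitehoriz-adv-keep} preserves it trivially. Within the inner loop, $L_{\text{impl}}^{H-1}$ must survive (i)~the intruder assignment $r_v := *$ with test $0 \leq r_v \leq v_{\text{max}}$, and (ii)~the ODE $\{r' = -r_v,\, h' = -v,\, v' = a_o\}$. Step~(i) is immediate because the free system variable $r_v$ does not appear in $L_{\text{impl}}^{H-1}$, whose body quantifies over a bound $r_v$.

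The key step is ODE invariance. Let $r_v^\star$ be the intruder's currently chosen rate and suppose the ODE runs for duration $\tau$. To verify $L_{\text{impl}}^{H-1}$ in the new state, fix an arbitrary future nominal index $(r_v', t')$ with $r_v' \in [0, v_{\text{max}}]$ and apply the \emph{original} invariant at
\[
\tilde{t} = \tau + t', \qquad \tilde{r}_v = \frac{r_v^\star \tau + r_v' t'}{\tau + t'},
\]
which is admissible because $\tilde{r}_v$ is a convex combination of values in $[0, v_{\text{max}}]$. Algebraically, $r(0) - \tilde{r}_v \tilde{t} = r(\tau) - r_v' t'$, so the horizontal clause $|r(\tau) - r_v' t'| > r_p$ follows from $|r(0) - \tilde{r}_v \tilde{t}| > r_p$. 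Because the ownship strategy traces the original nominal trajectory exactly, a piecewise computation on $A_{\text{lo}}$ shows that the new-state nominal climb amount $h_n^{\text{new}}(t')$ satisfies $h_n^{\text{new}}(t') - h(\tau) = h_n^{(0)}(\tilde t) - h(0)$, so the vertical clause transfers identically. This reindexing is what makes the universal quantification over $r_v$ essential and is why $L_{\text{impl}}^{H-1}$ (rather than any formula bound to a single $r_v$) is the right invariant.

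The main obstacle is exactly this ODE invariance step: the body of $L_{\text{impl}}^{H-1}$ carries a nested universal quantifier over the continuous parameter $r_v$ that also appears as an ODE coefficient, which defeats the usual term-by-term differential-invariant tactics. The convex-combination instantiation above must be supplied as an explicit hint in \KeYmaeraX, together with a case split on whether the advisory has already been reached ($wv \geq wv_{\text{lo}}$) in order to align the piecewise definition of $h_n$ across the two frames.
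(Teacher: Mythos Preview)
Your proposal uses precisely the winning strategy the paper gives and the natural loop invariant $L_{\text{impl}}^{H-1}$; the convex-combination reindexing $\tilde r_v = (r_v^\star\tau + r_v' t')/(\tau+t')$ is the correct instantiation witness for the horizontal clause and is exactly the arithmetic detail the paper's brief proof sketch omits.

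One point needs tightening. The claim that the ownship ``traces the original nominal trajectory exactly'' does not hold in \rref{model:infinitehoriz}: the ownship fixes $a_o$ once per outer iteration, while the intruder controls the inner loop duration and can run the ODE past $T_{\text{lo}}$ with $a_o = wa_{\text{lo}}$ still in force (or let it run with $a_o=0$ while $wv>wv_{\text{lo}}$ strictly). In those overshoot cases your asserted equality $h_n^{\text{new}}(t') - h(\tau) = h_n^{(0)}(\tilde t) - h(0)$ fails; a direct computation gives only the one-sided bound
\[
w\bigl(h_n^{\text{new}}(t') - h(\tau)\bigr) \;\geq\; w\bigl(h_n^{(0)}(\tilde t) - h(0)\bigr),
\]
with strict inequality by $\tfrac{a_{\text{lo}}}{2}(\tau-T_{\text{lo}})^2$ in the overshoot case. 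That inequality is still the right direction to transfer the vertical clause $w(h_n-h)>h_p$, so the argument survives once $=$ is relaxed to $\geq$. A smaller nit: at $t=0$ the invariant \emph{implies} the postcondition rather than being equivalent to it (you only get $|r|>r_p \lor -wh>h_p$, which is the needed direction).
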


\begin{proof}
The \KeYmaeraX proof develops a winning strategy for choosing ownship control $a_o$:
\begin{equation*}
    a_o=
    \begin{cases}
        wa_\text{lo} & \text{if}~wv < wv_\text{lo} \\
        0 & \text{if}~wv_\text{lo} \leq wv
    \end{cases}
\end{equation*}
We use minimal vertical acceleration $a_\text{lo}$ in direction $w$ to adjust the climb rate towards the advisory.
When the advisory is met (when $wv \geq wv_{\text{lo}}$), we keep the climb rate steady by picking $a_o=0$.
\end{proof}

\section{Bounded-Time Safety for a Non-Maneuvering Intruder} \label{sec:BoundNon}

Up to this point, the models described in this article have all conveyed the notion that advisories issued by the ACAS~X system must be safe \emph{indefinitely}; that is, while the system \emph{can} issue new advisories at any point, the aircraft must avoid collision \emph{whether or not} the system issues a new advisory to pass our rigorous notion of safety. In the context of the above models, this need comes from the lack of a time bound on the differential equations, so \emph{every} duration of the differential equation must guarantee safety for our model to be safe, even without another advisory change.

In a realistic scenario, this idea is limiting because it typically forces the system to give fairly strong advisories too early on and too frequently. While this is acceptable from a safety-perspective, it is unacceptable from a pilot-perspective. A system which gives overly frequent advisories can be distracting to the pilot and may lead the pilot to ignore the system warnings all together. Thus, if an encounter is not immediately threatening, the ACAS~X implementation will actually issue COC or a preventative advisory like DNC or DND, and will later issue a more disruptive advisory as the threat level increases. In cases like this, the preliminary advisory may not be safe indefinitely, but safety can later be restored with a subsequent advisory (as will be the case in \rref{fig:safeable}). 

We represent this idea through the concept of safeability from previous ACAS~X work \cite{acasx}.

\begin{definition}[Safeable] 
We say that an advisory is \emph{safeable} if and only if it is safe or can still be made safe in the future, if necessary, via subsequent advisories.
\end{definition}

In other words, we need only ensure the safety of an advisory for a few seconds as long as \emph{some} followup advisory exists that will keep us safe forever. Once again, the safeable region of a given advisory is always a superset of its safe region, so safeability is a more robust definition.

To develop the safeability models, we assume that the ownship and intruder are approaching at a constant horizontal rate of closure $r_v$, and $r$ and $h$ represent the relative horizontal and vertical separation, respectively. A subsequent advisory is called a \emph{reversal} when the sign of $w$ reverses from the initial advisory. If $w$ does not change, the subsequent advisory is a \emph{strengthening} or \emph{weakening}.

\begin{figure}[b!]
    \centering
    \includegraphics[scale=\imagescale]{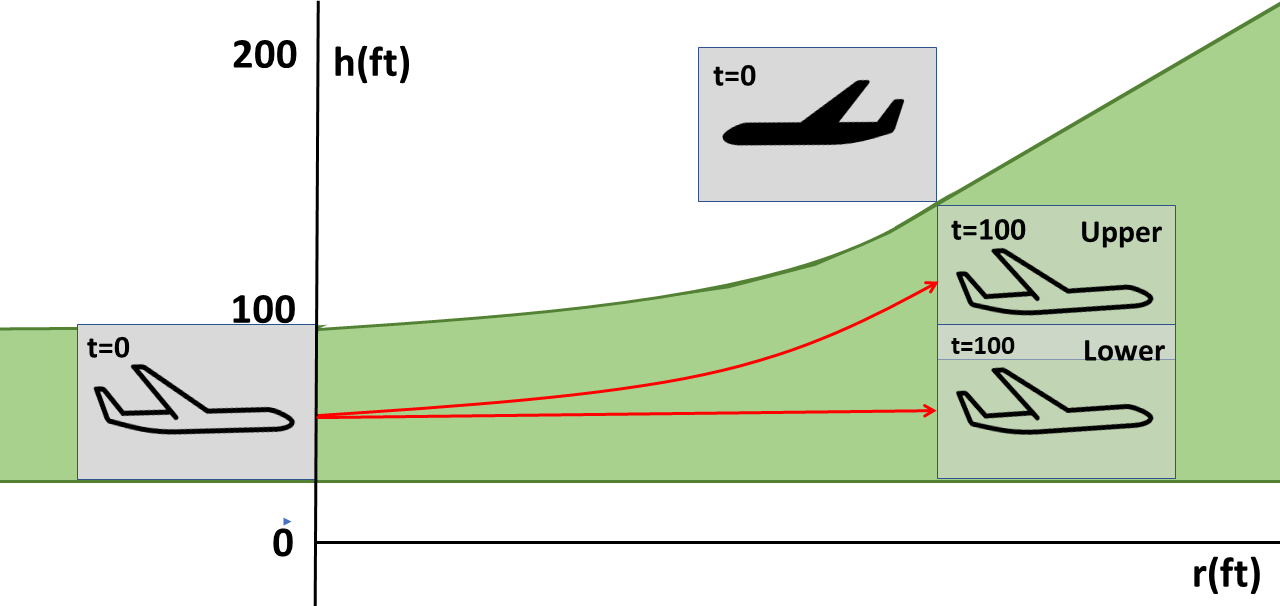}
    \Description[A two-sided safe region.]{An encounter between ownship and non-maneuvering intruder showing the two-sided region, with two compliant trajectories.}
    \caption{An encounter between ownship and non-maneuvering intruder showing the two-sided region, with two compliant trajectories in red.}
    \label{fig:twosidednointruder}
\end{figure}

Given the complexity of proving safeability, we first prove bounded-time, upper and lower bounded safe regions as building blocks to proving safeability. All previous safe regions constructed in this article have only been lower-bounded for $w=1$ (or only upper-bounded for $w=-1$), meaning the ownship must achieve some minimum acceleration or velocity in order to stay within the region. We now incorporate two-sided bounds on the ownship's acceleration and vertical velocity to construct a region which is bounded from above \emph{and} below (see \rref{fig:twosidednointruder}). This is important because in a safeable scenario, the ownship could receive a reversal advisory, meaning it must reverse its direction of flight between the first and second advisories. In this case, it is imperative that we know both an upper and lower bound on the trajectory of the ownship in the first phase of flight in order to reason about its position and velocity in the second phase of flight.

The following models prove safety of the two-sided safe regions only up to some time $\epsilon$. While it is \emph{not sufficient} only to prove safety for some bounded $\epsilon$ amount of time, this step lays important groundwork for proving the fully safeable region in \rref{sec:SafeNon}. The safeable proofs require that an advisory be provably safe only for $\epsilon$ time and that some followup advisory is safe indefinitely, so the bounded-time proof is a necessary stepping stone to ultimately proving the safeable region.   

We, again, start with the model which allows no maneuverability to the intruder before moving on to the model which allows the intruder to accelerate vertically.

\subsection{Model} \label{sec:BoundNonModel}
\rref{model:boundednon} can be found below with differences to the infinite-time \rref{model:infinitenon} highlighted in \textbf{bold}.
\setcounter{modelline}{0}
\begin{model}[b!]
\caption{Bounded-time safety for a non-maneuvering intruder}
\label{model:boundednon}
\begin{align*}
\text{init}     &\,\bigl|\mline{line:boundednon-init} \rp \geq0\land\hp>0\land r_v\geq0\land a_{\text{lo}}>0\land(w=-1\lor w=1) \bm{\land a_{\textbf{up}} > a_{\textbf{lo}}}\land {} \\
\text{R}        &\,\bigl|\mline{line:boundednon-R} \bm{C_{\textbf{impl}}^{\epsilon}(r,h,v,w,v_{\textbf{lo}},v_{\textbf{up}})} \\
                &\,\phantom{\bigl|}\mline{line:boundednon-imply} \limply  \\
                &\,\phantom{\bigl|}\mline{line:boundednon-loopstart}\bigl[\bigl( \\
\text{advisory} &\left|
\begin{aligned}
  &\mline{line:boundednon-adv}\quad\bigl((\pchoice{\pumod{w}{1}}{\pumod{w}{-1}}); \pumod{v_{\text{lo}}}{\ast};\bm{v_\textbf{up} := *;} \ptest{\bm{C_{\textbf{impl}}^{\epsilon}(r,h,v,w,v_{\textbf{lo}},v_{\textbf{up}})}};\\ &\mline{line:boundednon-adv-2}\phantom{\quad\bigl((\pchoice{\pumod{w}{1}}{\pumod{w}{-1}}); \pumod{v_{\text{lo}}}{\ast};\bm{v_\textbf{up} := *;}~} \pumod{\text{advisory}}{(w,v_{\text{lo}},\bm{v_{\textbf{up}}})}\bigr); 
 \end{aligned}
 \right.\\
                &\,\phantom{\bigl|}\mline{line:boundednon-timereset}\phantom{\bigl(}\quad\bm{t := 0;}  \\
\text{ownship}  &\,\bigl|\mline{line:boundednon-ownship}\phantom{\bigl(}\quad\bm{\bigl(}~ \pdual{\bigl(\prandom{a_o}; \ptest{(-a_\text{max} \leq a_o \leq a_\text{max})}\bigr)}; \\
\text{motion}   &\left|
\begin{aligned}
    &\mline{line:boundednon-motion-ode}\phantom{\bigl(}\quad\phantom{\bm{\bigl(}}~\{\D{r}=-r_v \syssep \D{h}=-v\syssep\D{v}=a_o \syssep \bm{t' = 1} \bm{~\&~ (t \leq \epsilon \lor \epsilon < 0)}\\
    &\mline{line:boundednon-motion-event}\phantom{\bigl(\bigl(}  \quad\qquad\qquad\qquad \bm{\land ((w a_o \leq 0 \lor wv \leq wv_\textbf{up}) \cup (w a_o \geq 0 \land wv \geq wv_\textbf{up}))} \}\\
\end{aligned}
\right.\\
&\,\phantom{\bigl|}\mline{line:boundednon-eventloopend}\phantom{\bigl(}\quad\bm{\bigr)^\ast}  \\
\neg\text{NMAC} &\,\Bigl|\mline{line:boundednon-safe} \phantom{\Bigl[\,}\bigr)^\ast\bigr]\bigl(\lvert r \rvert > \rp \lor \lvert h \rvert > \hp\bigr)
\end{align*}
\end{model}
First, we add symbolic upper bounds on the ownship acceleration $a_{\text{up}}$ and vertical velocity $v_{\text{up}}$ (usually $g/2$ and $10,000$\,ft/min, respectively). 
The upper bound $v_\text{up}$ increases the complexity of the ownship strategy: the strategy in \rref{thm:nonmaneuvering} picked acceleration $a_o \geq wa_\text{lo}$ towards satisfying $wv \geq wv_\text{lo}$, but now we must ultimately adjust this choice again before violating $wv \leq wv_\text{up}$.
We address this with an \emph{event-triggered} design in lines~\ref{line:boundednon-motion-ode}--\ref{line:boundednon-motion-event} with evolution domain constraints to monitor when $wv$ crosses $wv_\text{up}$.
In order to not discard behavior for the sake of detecting this event, we follow the standard pattern \cite{DBLP:journals/sttt/QueselMLAP16} to model event monitoring with overlapping evolution domain constraints.\footnote{The notation $\{\D{x}=f(x)~\&~P(x) \land (Q_1(x) \cup Q_2(x))\}$ is shorthand notation for the non-deterministic choice between ODEs that only differ in their evolution domain constraints: $\{\D{x}=f(x)~\&~P(x) \land Q_1(x)\} \cup \{\D{x}=f(x)~\&~P(x) \land Q_2(x)\}$.
Crucially, $\{\D{x}=f(x)~\&~P(x) \land (Q_1(x) \cup Q_2(x))\}$ differs from $\{\D{x}=f(x)~\&~P(x) \land (Q_1(x) \lor Q_2(x))\}$ in its ability to detect the events of handover from $Q_1(x)$ to $Q_2(x)$ or vice versa.
}
We get notified that an event occurred exactly at the boundary where the two evolution domain constraints $wa_o \leq 0 \lor wv \leq wv_\text{up}$ and $wa_o \geq 0 \land wv \geq wv_\text{up}$ overlap.
Through the loop on lines~\ref{line:boundednon-ownship}--\ref{line:boundednon-eventloopend}, the ownship can react to the event by selecting a new acceleration $a_o$ in \rref{line:boundednon-ownship}.

To model the bounded-time safety, we add variable $t$ which acts as a timer for the duration of the dynamics. It is reset to $0$ on \rref{line:boundednon-timereset} at the beginning of each new advisory and evolves at a constant rate in the dynamics equation on \rref{line:boundednon-motion-ode}. 
Once this variable reaches $\epsilon$, the differential equations are stopped from evolving further, and a new advisory must be issued.
This time bound $t\leq \epsilon$ guarantees that the ownship will be given a new advisory and update its acceleration after at most $\epsilon$-time has passed. 
The time-unbounded case is represented with the condition $\epsilon < 0$.

Another important difference introduced in this model is the removal of the test $\ptest{\ltrue}$ from \rref{line:boundednon-adv}. This has been removed to disallow the ownship from blindly continuing with the old advisory after $\epsilon$ time. While the same advisory can be given on each iteration of the loop, the removal of the test $\ptest{\ltrue}$ ensures that new advisories will definitely be considered after each $\epsilon$ time step. 

Even though a new advisory is given after $\epsilon$ time elapses, this does not mean that a safe advisory \emph{exists}. If no safe advisory exists, the test on \rref{line:boundednon-adv} would fail, and the model would be vacuously true since no possible runs of the system exist (and therefore all runs satisfy the postcondition). Thus, our model is only meaningful for $\epsilon$ time, after which we cannot draw any meaningful conclusions about safety. However, we address this issue of liveness in the models in Sections \ref{sec:SafeNon} and \ref{sec:SafeVert}.  

\subsection{Implicit Formulation of the Safe Region} \label{sec:BoundNonRegions}
The safe region of this model now consists of two separate safe regions, the lower bounded region $L^{\epsilon}_{\text{impl}}$, and the new upper region $U^{\epsilon}_{\text{impl}}$ listed in \rref{def:twosidedepsilon}.

The following observation is important for understanding the safe regions.
In the event that the ownship is already exceeding $v_{\text{up}}$ when receiving the $v_{\text{up}}$ advisory, it is unrealistic for the pilot to accelerate downwards to reach the upper bound $v_{\text{up}}$ from above. 
Instead, we assume the pilot will not accelerate further if she has already exceeded the target $v_{\text{up}}$.
Therefore, on an initial overcompliance in vertical velocity $(wv \geq wv_{\text{up}})$, the target velocity becomes $v$ as opposed to $v_{\text{up}}$. 

\noindent\paragraph{First case:}
Consider $w = + 1$ and $v_{\text{up}} \geq v$. Just like for the lower safe region, we consider a nominal trajectory $\mathcal{N}_{\mathit{up}}(t)$ to characterize the upper safe region. In this region, we again accelerate upwards at the upper acceleration $a_{\text{up}}$ until reaching $v_{\text{up}}$ (or $v$ in the case of initial overcompliance) and then continue linearly at velocity max$(v_{\text{up}},v)$. As before, we continue at a constant horizontal velocity $r_v$. We give the position along the nominal trajectory as a function of time:
\[
  (r_n,h_n) =
  \begin{cases}
        \left( r_vt, \frac{a_{\text{up}}}2t^2+vt\right) & \text{if $0 \leq t < \frac{v_{\text{up}}-v}{a_{\text{up}}}$} \\
        \left(r_vt,v_{\text{up}}t - \frac{(v_{\text{up}}-v)^2}{2a_{\text{up}}}\right) & \text{if $\frac{v_{\text{up}}-v}{a_{\text{up}}} \leq t$} \\
  \end{cases}
\]

In the same way that we knew that the ownship is above the nominal trajectory in the lower-bounded region, we know that the ownship will be below this upper nominal trajectory because $a_o \leq a_{\text{up}}$ and $v \leq v_{\text{up}}$ (or $a_o \leq 0$). Therefore, in order to ensure safety of the ownship we need:
\[
\forall t\, \forall r_n \, \forall h_n \left((t\leq \epsilon \lor \epsilon < 0) \land (r_n,h_n) \in \mathcal{N}_{\mathit{up}}(t) \rightarrow \vert r - r_n \vert > r_p \lor h - h_n > h_p\right)
\]
Now that we have the regions $L^{\epsilon}_{\text{impl}}$ and $U^{\epsilon}_{\text{impl}}$ (\rref{def:twosidedepsilon}), we can characterize the two-sided safe region $C^{\epsilon}_{\text{impl}}$ as their disjunction. Given the fact that $a_{\text{lo}} \leq a_o \leq a_{\text{up}}$, we know that the ownship stays between the two nominal trajectories, so its flight is properly upper and lower bounded; thus, as long as either the lower nominal trajectory or the upper nominal trajectory avoids collision, the ownship will be safe. This is the motivation for the disjunction of the two safe regions: since the ownship is between the two trajectories, only one trajectory needs to be safe for the ownship to be safe as well. The implicit formulation of $C^{\epsilon}_{\text{impl}}$ is shown in \rref{def:twosidedepsilon} and used in \rref{thm:bounded-non}, which has been verified to be safe using \KeYmaeraX.

\begin{definition}[Implicit formulation of the two-sided safe region]%
\label{def:twosidedepsilon}%
\begin{align*}
    T_{\text{up}}(v,w,v_\text{up}) &\equiv \frac{\text{max}(0,w(v_{\text{up}}-v))}{a_{\text{up}}}\\
    A_{\text{up}}(v,w,v_\text{up},h_n,t) &\equiv \left( 0 \leq t < T_{\text{up}} \land h_n = \frac{wa_{\text{up}}}2t^2 + vt\right)  \\
    &\lor \left( t \geq T_{\text{up}} \land h_n = w\text{max}(wv_{\text{up}},wv)t - \frac{w\text{max}(0,w(v_{\text{up}}-v))^2}{2a_{\text{up}}} \right) \\
    L_{\text{impl}}^{\epsilon}(r,h,v,w,v_\text{lo}) &\equiv \forall t \, \forall r_n \, \forall h_n \bigl( (t\leq \epsilon \lor \epsilon < 0) \land r_n = r_vt \land A_{\text{lo}}(v,w,v_\text{lo},h_n,t)\\ 
    & \phantom{\equiv \forall t \, \forall r_n \, \forall h_n ( (t\leq \epsilon \lor \epsilon < 0) \land} \limply (\vert r-r_n\vert >r_p \lor w(h_n-h) > h_p )\bigr)\\
    U_{\text{impl}}^{\epsilon}(r,h,v,w,v_\text{up}) &\equiv \forall t \, \forall r_n \, \forall h_n \bigl( (t\leq \epsilon \lor \epsilon < 0) \land r_n = r_vt \land A_{\text{up}}(v,w,v_\text{up},h_n,t)\\ 
    & \phantom{\equiv \forall t \, \forall r_n \, \forall h_n \bigl( (t\leq \epsilon \lor \epsilon < 0) \land} \limply (\vert r-r_n\vert >r_p \lor w(h-h_n) > h_p )\bigr)\\
    C^{\epsilon}_{\text{impl}}(r,h,v,w,v_\text{lo},v_\text{up}) &\equiv wv_\text{lo} \leq wv_\text{up} \land (L_{\text{impl}}^{\epsilon}(r,h,v,w,v_\text{lo})  \lor U_{\text{impl}}^{\epsilon}(r,h,v,w,v_\text{up}))
\end{align*}%
with $A_{\text{lo}}(v,w,v_\text{lo},h_n,t)$ and $T_{\text{lo}}(v,w,v_\text{lo})$ per \rref{def:implicitinfinite}.
\end{definition}

\begin{theorem}[Bounded-time non-maneuvering intruder: correctness of two-sided bounded-time safe regions]
\label{thm:bounded-non}
The \dGL formula given in \rref{model:boundednon} is valid. That is as long as the advisories obey formula $C^{\epsilon}_{\text{impl}}$ in \rref{def:twosidedepsilon} the winning strategy will avoid NMAC.
\end{theorem}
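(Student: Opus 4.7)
The plan is to prove validity by induction on the outer loop of \rref{model:boundednon}, using $C^{\epsilon}_{\text{impl}}(r,h,v,w,v_\text{lo},v_\text{up})$ as the loop invariant. The precondition on \rref{line:boundednon-R} establishes the invariant initially, and it entails the postcondition directly: instantiating either disjunct of $C^{\epsilon}_{\text{impl}}$ at $t = 0$ forces $r_n = 0$ and $h_n = 0$ in both branches of the piecewise $A_\text{lo}$ or $A_\text{up}$, so the consequent collapses to $|r| > \rp \lor |h| > \hp$. Preservation then reduces entirely to the inner event loop: the environment's subsequent advisory choice on \rref{line:boundednon-adv} is guarded by the test $\ptest{C^{\epsilon}_{\text{impl}}}$, so the only work is to exhibit a winning strategy for the dual ownship choice on \rref{line:boundednon-ownship} that preserves $C^{\epsilon}_{\text{impl}}$ throughout each ODE segment.

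For the inner loop I would carry $C^{\epsilon}_{\text{impl}}$ as invariant and split on its two disjuncts to pick $a_o$. When $L^{\epsilon}_{\text{impl}}$ holds, choose $a_o = wa_\text{lo}$ while $wv < wv_\text{lo}$ and $a_o = 0$ once $wv_\text{lo} \leq wv$, exactly reusing the strategy from the proof of \rref{thm:nonmaneuvering}. When instead $U^{\epsilon}_{\text{impl}}$ holds, take the symmetric choice $a_o = wa_\text{up}$ while $wv < wv_\text{up}$ and $a_o = 0$ once $wv_\text{up} \leq wv$. The overlapping evolution-domain constraints on \rref{line:boundednon-motion-event} serve as event monitors: the ODE is forced to halt at the crossing $wv = wv_\text{up}$ whenever the sign of the chosen $a_o$ becomes inconsistent with the active domain, at which point the inner loop iterates and the strategy reselects $a_o$ for the next segment. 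Preservation of the active disjunct along a single ODE segment is then a differential-invariant argument of the same shape as in \rref{thm:nonmaneuvering}, instantiating the universal $t$ in the invariant with $t + \tau$ and folding the ODE-evolved offset back into $r$, $h$, $v$.

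The main obstacle will be combining the disjunctive safe region with the event-triggered ODE, since each inner-loop iteration must preserve at least one of $L^{\epsilon}_{\text{impl}}$ and $U^{\epsilon}_{\text{impl}}$ under whichever of the two overlapping domains the ODE is run with, and the chosen $a_o$ must be simultaneously compatible with the domain hypothesis and the strategy phase. The hand-off at $wv = wv_\text{up}$ from the accelerating phase to the coasting phase is the focal point and depends crucially on the $wv_\text{lo} \leq wv_\text{up}$ conjunct built into $C^{\epsilon}_{\text{impl}}$ together with the domain overlap at the event. The remaining bookkeeping is arithmetic-intensive but routine in \KeYmaeraX: dispatching $w = \pm 1$, reasoning separately about the parabolic and linear pieces of $A_\text{lo}$ and $A_\text{up}$ as demarcated by $T_\text{lo}$ and $T_\text{up}$, and threading the $t$-shift substitutions through to discharge the quantified differential-invariant goals.
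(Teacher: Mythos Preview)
Your plan mirrors the paper's: take $C^{\epsilon}_{\text{impl}}$ as the (outer and inner) loop invariant, discharge the base case by instantiating at $t=0$, and exhibit a winning choice of $a_o$ whose preservation along each ODE segment is a time-shift differential-invariant argument. The one substantive divergence is your strategy in the $U^{\epsilon}_{\text{impl}}$ branch. The paper does \emph{not} case-split on which disjunct holds; it uses the single uniform rule $a_o = wa_\text{lo}$ when $wv < wv_\text{lo}$ and $a_o = 0$ otherwise. That choice keeps the ownship simultaneously at or above the lower nominal and (because $a_\text{lo} < a_\text{up}$) strictly below the upper nominal, so whichever of $L^{\epsilon}_{\text{impl}}$ or $U^{\epsilon}_{\text{impl}}$ held is preserved without ever inspecting which one it was. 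Your symmetric choice $a_o = wa_\text{up}$ would also preserve $U^{\epsilon}_{\text{impl}}$ (it rides the upper nominal exactly), but it has a gap: the dual test $\ptest{(-a_\text{max} \leq a_o \leq a_\text{max})}$ on \rref{line:boundednon-ownship} must be passed by the ownship, and none of the init assumptions of \rref{model:boundednon} give $a_\text{up} \leq a_\text{max}$---only $a_\text{lo}$ is bounded by $a_\text{max}$. So your $U$-branch strategy may be infeasible as written. The fix is exactly what the paper does: use $wa_\text{lo}$ (or $0$) even in the $U$ case, which still keeps the ownship below the upper nominal and hence still preserves $U^{\epsilon}_{\text{impl}}$, while only demanding the available bound $a_\text{lo} \leq a_\text{max}$. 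This also buys you a simpler invariant argument, since you never need to track which disjunct is active.
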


\begin{proof}
The \KeYmaeraX proof develops a winning strategy for choosing ownship control $a_o$:
\begin{equation*}
a_o = 
\begin{cases}
wa_\text{lo} & \text{if}~wv < wv_\text{lo} \\
0 & \text{if}~wv_\text{lo} \leq wv \leq wv_\text{up} \\
0 & \text{if}~wv_\text{up} < wv
\end{cases}
\end{equation*}
We pick $wa_\text{lo}$ to accelerate towards the advisory if the ownship is not yet in compliance $wv < wv_\text{lo}$.
This case crucially relies on the event-triggered design, which notifies the ownship of a required change in strategy before violating $wv \leq wv_\text{up}$.
When in compliance $wv_\text{lo} \leq wv \leq wv_\text{up}$ or in overcompliance $wv_\text{up} < wv$, we simply pick $a_\text{lo}=0$ to maintain the current climb rate.
The safe region $C^{\epsilon}_{\text{impl}}$ serves as a loop invariant.
\end{proof}

\section{Bounded-Time Safety for a Vertically-Maneuvering Intruder} \label{sec:BoundVert}

With the groundwork laid in \rref{sec:BoundNon}, we expand \rref{model:boundednon} for bounded-time safety to \rref{model:boundedvert} where we allow the intruder to control its vertical velocity.

\subsection{Model} \label{sec:BoundVertModel}
\setcounter{modelline}{0}
\begin{model}
\caption{Bounded-time vertically-maneuvering intruder}
\label{model:boundedvert}
\begin{align*} 
\text{init}     &\,\bigl|\mline{line:boundedvert-init} \rp \geq0\land\hp>0\land r_v\geq0\land a_{\text{lo}}>0\land a_{\text{up}} > a_{\text{lo}}\land\bm{c > 0} \,\land(w=-1\lor w=1)\land {} \\
\text{R}        &\,\bigl|\mline{line:boundedvert-R} C_{\text{impl}}^{\epsilon}(r,h,v,w,v_{\text{lo}},v_{\text{up}}) \\
                &\,\phantom{\bigl|}\mline{line:boundedvert-imply} \limply  \\
                &\,\phantom{\bigl|}\mline{line:boundedvert-loopstart}\bigl[\bigl( \\
\text{advisory} &\left|
\begin{aligned}
  &\mline{line:boundedvert-adv}\phantom{\bigl(}\quad\bigl((\pchoice{\pumod{w}{1}}{\pumod{w}{-1}}); \pumod{v_{\text{lo}}}{\ast};\pumod{v_{\text{up}}}{\ast}; \ptest{C_{\text{impl}}^{\epsilon}}; \pumod{\text{advisory}}{(w,v_{\text{lo}},v_{\text{up}}))}\bigr); 
 \end{aligned}
 \right.\\
                &\,\phantom{\bigl|}\mline{line:boundedvert-timereset}\phantom{\bigl(}\quad\bm{t := 0;}  \\
\text{estimate}  &\,\bigl|\mline{line:boundedvert-estimate}\phantom{\bigl(}\quad{\bigl(~\bm{\pdual{(c_o := \ast;\ptest{c_o \geq 0})}};} \\
\text{ownship}  &\,\bigl|\mline{line:boundedvert-ownship}\phantom{\bigl(\quad\bm{\bigl(~}}\bigl(~ \pdual{(\pumod{a_o}{\ast}; \ptest{(-a_\text{max} \leq a_o \leq a_\text{max})})}; \\
\text{intruder}  &\,\bigl|\mline{line:boundedvert-intruder}\phantom{\bigl(\quad\bm{\bigl(~}\bigl(~}\bigl(
\prandom{a_i};\ptest{(-\bm{c_o} < a_i < \bm{c_o})};\\
\text{motion}   &\left|
\begin{aligned}
    &\mline{line:boundedvert-motion-ode}\phantom{\bigl(\quad\bm{\bigl(~}\bigl(~\bigl(}\{\D{r}=-r_v \syssep \D{h}=-v\syssep\bm{v'=a_o-a_i} \syssep \D{t}=1 ~\&~ (t \leq \epsilon \lor \epsilon<0)\\
    &\mline{line:boundedvert-motion-events}\phantom{\bigl(\bigl(}  \quad\qquad\qquad\qquad  \land (\bm{wv \leq wv_\textbf{lo}} \cup \bm{wv_\textbf{lo} \leq wv \leq wv_\textbf{up}} \cup \bm{wv_\textbf{up} \leq wv}) \}  \\
\end{aligned}
\right.\\
&\,\phantom{\bigl|}\mline{line:boundedvert-odeloopend}\phantom{\bigl(\quad\bigl(\quad}\bigr)^\ast\\
&\,\phantom{\bigl|}\mline{line:boundedvert-aoloopend}\phantom{\bigl(\quad\bigl(}\bm{\bigr)^\ast}  \\
&\,\phantom{\bigl|}\mline{line:boundedvert-coloopend}\quad\bm{\bigr)^\ast}  \\
\neg\text{NMAC} &\,\Bigl|\mline{line:boundedvert-safe} \phantom{\Bigl[\,}\bigr)^\ast\bigr]\bigl(\lvert r \rvert > \rp \lor \lvert h \rvert > \hp\bigr)
\end{align*}
\end{model}

Again, the variables $t$ and $v_{\text{up}}$ as well as the constants $a_{\text{up}}$ and $\epsilon$ represent the upper safe region and our time bound, the variable $a_i$ tracks the intruder's acceleration, and we allow it to affect the relative vertical velocity $v$. 
The main update in \rref{model:boundedvert} is how the ownship reacts to intruder behavior.
In \rref{line:boundedvert-estimate}, the ownship estimates a bound $c_o$ for the upcoming intruder acceleration (e.g., the worst-case bound $c$ or a less permissive bound).
The ownship strategy can take this estimate into account when picking acceleration $a_o$.
The intruder then, in \rref{line:boundedvert-intruder}, gets to select intruder acceleration $a_i$: if that choice happens to fit to the ownship estimate, the test $\ptest{(-c_o < a_i < c_o)}$ passes and the model continues with motion on lines~\ref{line:boundedvert-motion-ode}--\ref{line:boundedvert-motion-events}. 
Otherwise the test fails and the only runs either have the intruder change its acceleration choice through the loop in lines~\ref{line:boundedvert-intruder}--\ref{line:boundedvert-motion-events}, or have control return to the ownship via the loop on lines~\ref{line:boundedvert-estimate}--\ref{line:boundedvert-coloopend} to update the estimate $c_o$.
This model allows for a variety of system implementations over a range of interaction requirements:
\begin{itemize}
    \item no interaction between intruder and ownship is required when the ownship uses the worst-case acceleration $c$ as its
    estimate $c_o$;
    \item the ownship may detect when the intruder acceleration exceeds the estimate $c_o$ and change its strategy in return;
    \item the ownship and intruder may cooperate to pick the bound $c_o$;
    \item the ownship may announce the bound $c_o$ as a requirement to the intruder.
\end{itemize}

The evolution domain constraint is again modified to detect when the relative climb rate falls below $wv_\text{lo}$ or exceeds the target $wv_\text{up}$, so that the ownship can change its strategy for overcoming the intruder motion.
The event detection mechanism of \rref{model:boundednon} is extended in \rref{line:boundedvert-motion-events} to detect when the ownship is about to no longer satisfy $wv_\text{lo} \leq wv \leq wv_\text{up}$, and so includes the choice between overlapping evolution domain constraints $wv \leq wv_\text{lo} \cup wv_\text{lo} \leq wv \leq wv_\text{up} \cup wv_\text{up} \leq wv$. 

Figure~\ref{fig:twosidedintruder} shows an example comparison of the current and former safe regions given the new intruder maneuverability. 
For simplicity we explain only the strategy for the upsense case $w=+1$ and the worst-case intruder acceleration $c$. 
In order to continue evolving, just as in the infinite-time \rref{model:infinitevert} from \rref{sec:InfVert} the ownship acceleration must be at least $a_{\text{lo}} + c$, or the ownship velocity is at least $v_{\text{lo}}$. 
Simultaneously, the ownship either initially overcomplies in terms of velocity ($wv_\text{up} \leq wv$), or its velocity must not exceed $v_{\text{up}}$ and its acceleration must not exceed $a_{\text{up}}-c$.
In this first case of overcompliance, it is no longer enough to stop accelerating upwards. 
The ownship now needs to compensate for the fact that the intruder could be accelerating downwards with acceleration at most $c$ by accelerating downwards with acceleration $c$ as well.  

\begin{figure}[tbhp]
    \centering
    \includegraphics[scale=\imagescale]{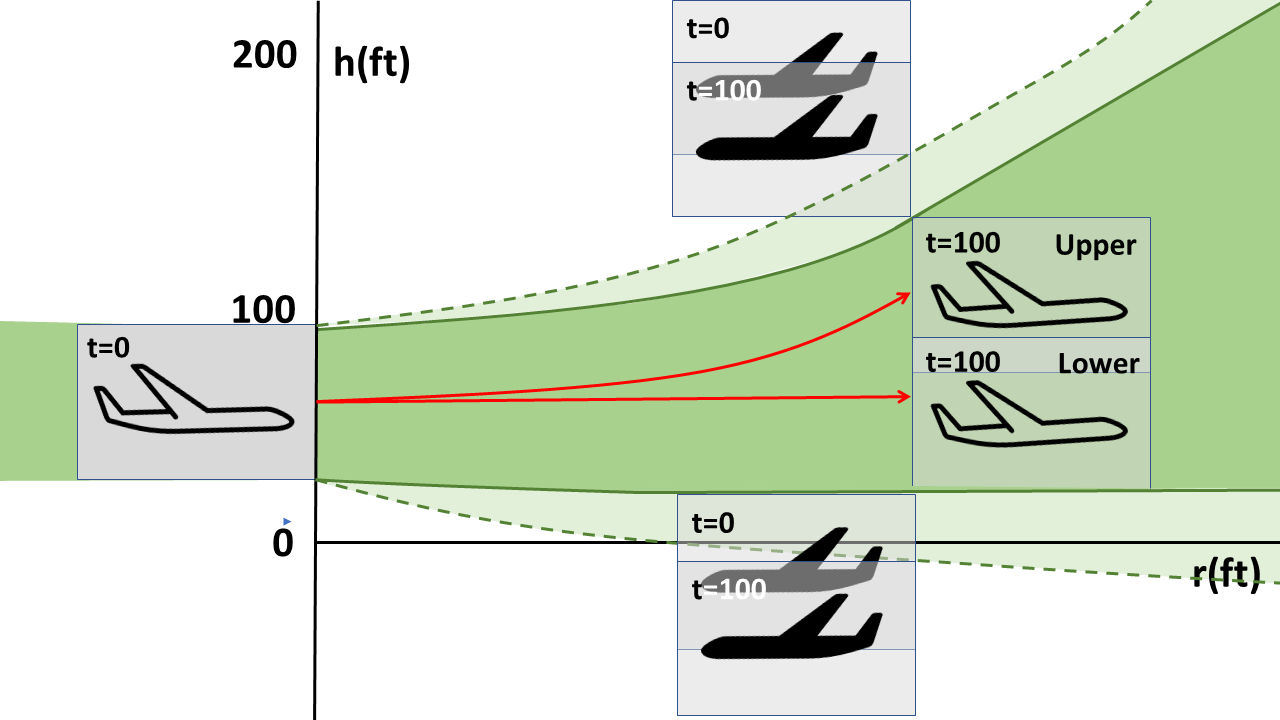}
    \Description[Nominal trajectory and two-sided safe region.]{An encounter between the ownship and a vertically-maneuvering intruder showing the two-sided region, with two compliant trajectories inside the safe region. The ownship is safe if it climbs strong enough to pass a climbing intruder above, or if it climbs weakly enough to stay below a descending intruder.}
    \caption{An encounter between ownship and vertically-maneuvering intruder showing the two-sided region, with two compliant trajectories in red. The larger region that would be safe without intruder maneuverability is shown in light green. }
    \label{fig:twosidedintruder}
\end{figure}

\subsection{Implicit Formulation of the Safe Region} \label{sec:BoundVertRegions}
The safe region of this model is exactly the same as that from \rref{sec:BoundNon}. Due to the full relativization of both the rate of vertical closure and vertical separation, as well as the requirement that $a_{\text{lo}} + c \leq a_o \leq a_{\text{up}} - c$, the $C^{\epsilon}_{\text{impl}}$ with coordinate system fixed at the intruder still applies to this model.

\begin{theorem}[Bounded-time vertically-maneuvering intruder: correctness of two-sided bounded-time safe regions]
\label{thm:bounded-vert}
The $\dGL$ formula given in \rref{model:boundedvert} is valid. That is as long as the advisories obey formula $C^{\epsilon}_{\text{impl}}$ from \rref{def:twosidedepsilon} the winning strategy will avoid NMAC.
\end{theorem}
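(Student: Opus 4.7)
The plan is to lift the winning strategy from \rref{thm:bounded-non} by the same $c$-compensation used in the step from \rref{thm:nonmaneuvering} to \rref{thm:infinitevert}. First, I would have the ownship commit to the worst-case intruder bound $c_o := c$ at the dual choice on \rref{line:boundedvert-estimate}, so that the intruder test $\ptest{-c_o < a_i < c_o}$ on \rref{line:boundedvert-intruder} coincides with the global constraint on $a_i$ from the init formula and the outer $c_o$-loop ending on \rref{line:boundedvert-coloopend} need not be re-entered. Given $c_o = c$, I would choose the event-triggered ownship strategy
\begin{equation*}
a_o = \begin{cases}
w(a_\text{lo} + c) & \text{if}~ wv < wv_\text{lo} \\
0 & \text{if}~ wv_\text{lo} \leq wv \leq wv_\text{up} \\
0 & \text{if}~ wv_\text{up} < wv
\end{cases}
\end{equation*}
mirroring the bounded-time non-maneuvering strategy from \rref{thm:bounded-non} except that the accelerating branch uses $a_\text{lo} + c$ rather than $a_\text{lo}$ to overcome any intruder acceleration within $(-c,c)$ and thereby guarantee a relative acceleration of at least $a_\text{lo}$ in direction $w$.

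The safe region $C^{\epsilon}_{\text{impl}}$ from \rref{def:twosidedepsilon} would serve as the outer-loop invariant on lines \rref{line:boundedvert-loopstart}--\rref{line:boundedvert-safe} exactly as in \rref{thm:bounded-non}. Within each band branch, the differential-invariant arguments then reduce essentially to those already discharged in \rref{thm:bounded-non} and \rref{thm:infinitevert}: the lower nominal trajectory of $L^{\epsilon}_{\text{impl}}$ dominates the actual relative trajectory from below because $w(a_o - a_i) \geq a_\text{lo}$ in the bottom band, while in the middle and upper bands $w(a_o - a_i) = -wa_i \in (-c,c)$ is small enough that the relative trajectory stays sandwiched between the nominal trajectories of $L^{\epsilon}_{\text{impl}}$ and $U^{\epsilon}_{\text{impl}}$. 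Full relativization of the vertical separation and closure rate keeps the coordinate system fixed at the intruder, so $C^{\epsilon}_{\text{impl}}$ applies unchanged, as observed in \rref{sec:BoundVertRegions}.

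The main obstacle will be coordinating the three nested loops on lines \rref{line:boundedvert-estimate}--\rref{line:boundedvert-coloopend}: because the intruder loop nests strictly inside the ownship-acceleration loop, the ownship's fixed choice of $a_o$ must survive arbitrarily many intruder re-choices of $a_i$, so every inner-loop invariant must hold uniformly over $a_i \in (-c,c)$. The natural candidates are the band conditions $wv < wv_\text{lo}$, $wv_\text{lo} \leq wv \leq wv_\text{up}$, and $wv_\text{up} < wv$, which the event-triggered evolution domain constraint on \rref{line:boundedvert-motion-events} preserves because the ODE halts exactly at each band boundary regardless of the intruder's choice. Once these inner invariants are discharged, the outer invariant $C^{\epsilon}_{\text{impl}}$ is preserved by the nominal-trajectory reasoning and the $\dGL$ box-loop reasoning concludes as in \rref{thm:bounded-non}, with any additional arithmetic constraints among $a_\text{lo}$, $a_\text{up}$, and $c$ needed to dominate the upper nominal trajectory against a helpful intruder absorbed into the init formula or the test on \rref{line:boundedvert-adv}.
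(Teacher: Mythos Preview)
Your overall plan is right, and your choice $c_o := c$ together with the lower-band acceleration $a_o = w(a_\text{lo}+c)$ and the middle-band choice $a_o = 0$ match the paper. The gap is in the overcompliance band $wv_\text{up} < wv$: choosing $a_o = 0$ there does not work, and the paper explicitly addresses this by choosing $a_o = -wc$ instead.

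The reason your ``sandwiching'' argument fails in the upper band is that the upper nominal trajectory in \rref{def:twosidedepsilon} assumes, in the overcompliance case, that the relative climb rate stays at its initial value $v$ (since $T_\text{up}=0$ and $h_n = w\max(wv_\text{up},wv)\,t = vt$). With $a_o = 0$ the intruder can pick $a_i$ arbitrarily close to $-wc$, giving $w(a_o - a_i)$ arbitrarily close to $c > 0$, so $wv$ strictly increases and the actual relative trajectory climbs above the upper nominal in direction $w$. That destroys $U^\epsilon_\text{impl}$, and since the loop invariant is the disjunction $L^\epsilon_\text{impl} \lor U^\epsilon_\text{impl}$ and only $U^\epsilon_\text{impl}$ need hold initially, the invariant is lost. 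The event-triggered domain constraint does not save you here: the upper band $wv_\text{up} \leq wv$ has no ceiling, so the ODE never halts on this drift. The $c$-compensation from \rref{thm:infinitevert} is one-sided (it only protects the lower nominal), whereas the two-sided region requires compensating in \emph{both} directions: $+wc$ to dominate the lower nominal when undercompliant, and $-wc$ to stay beneath the upper nominal when overcompliant. The paper's strategy $a_o = -wc$ in the upper band yields $w(a_o - a_i) < 0$ for every $a_i \in (-c,c)$, so the relative climb rate cannot exceed its initial value and $U^\epsilon_\text{impl}$ is preserved.
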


\begin{proof}
The \KeYmaeraX proof develops a winning strategy for choosing ownship control $a_o$ in reaction to the worst-case intruder acceleration estimate $c_o=c$:
\begin{equation*}
    a_o =
    \begin{cases}
        w(a_\text{lo}+c) & \text{if}~wv \leq wv_\text{lo} \\
        0 & \text{if}~wv_\text{lo} \leq wv \leq wv_\text{up} \\
        -wc & \text{if}~wv_\text{up} \leq wv \\
    \end{cases}
\end{equation*}
If the ownship is not yet in compliance $wv \leq wv_\text{lo}$, We pick $a_o = w(a_\text{lo}+c)$, which is the minimum compliant acceleration $a_\text{lo}$ compensated for worst-case intruder acceleration $c$.
When in compliance $wv_\text{lo} \leq wv \leq wv_\text{up}$, we pick $a_o=0$ for proof simplicity, but any acceleration $a_o \leq w(a_\text{up}-c)$ that does not exceed the upper acceleration $a_\text{up}$ compensated for worst-case intruder acceleration $c$ would work as well.
Finally, in overcompliance $wv_\text{up} \leq wv$, we decelerate downwards with $a_o=-wc$ to compensate for worst-case intruder acceleration $-c$.
\end{proof}

Unlike in the non-maneuvering case \rref{thm:bounded-non}, in the presence of a vertically maneuvering intruder there exists no choice of $a_o$ that keeps the relative climb rate constant, since the intruder is allowed to change its acceleration $a_i$ arbitrarily often.
As a result, in \rref{thm:bounded-vert} the choice of $a_o=0$, and any other choice in $a_o \leq w(a_\text{up}-c)$, crucially relies on the event detection that informs the ownship when $wv_\text{lo} \leq wv \leq wv_\text{up}$ is about to be violated and a change in strategy is required.

\section{Safeability for a Non-Maneuvering Intruder} \label{sec:SafeNon}

In this section, we combine bounded-time safety with infinite-time safety to the notion of safeability: it is safe to follow a bounded-time region if there exists an infinite-time follow-up advisory.
The intuition behind the safeable region is shown in \rref{fig:safeable}. 
\begin{figure}[tbhp]
    \centering
    \includegraphics[scale=\imagescale]{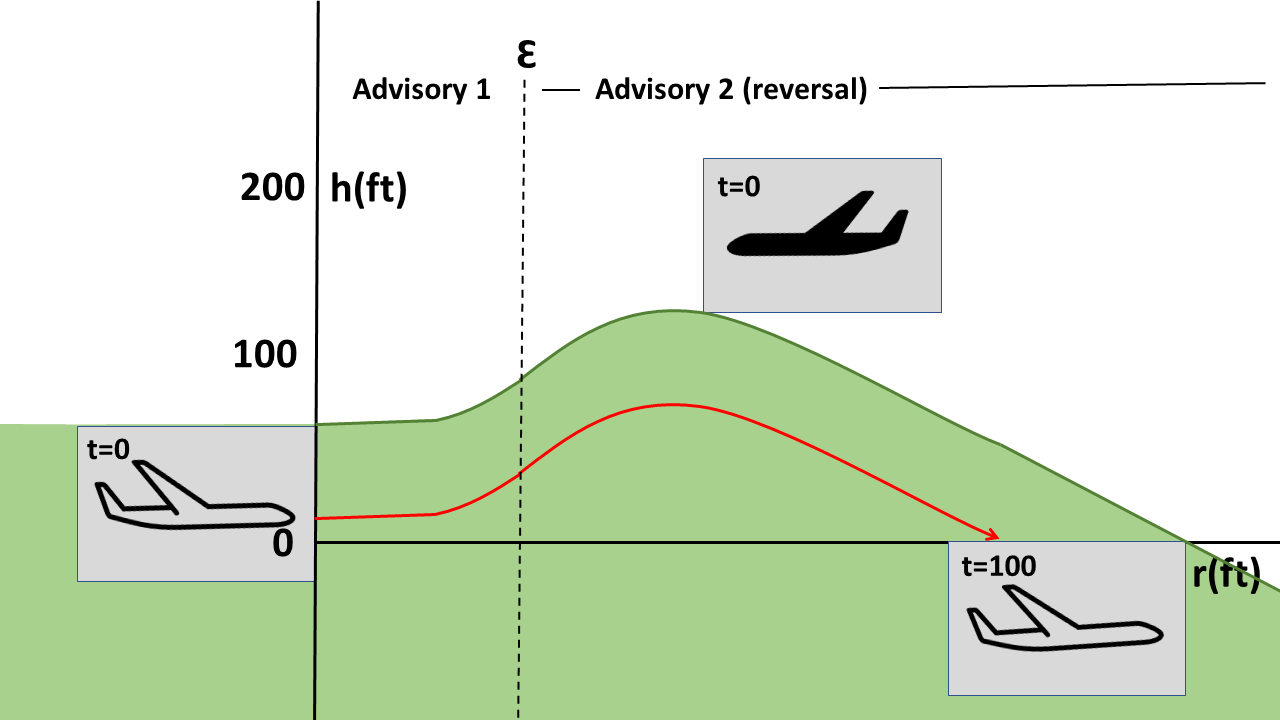}
    \Description[A safeable region with reversal.]{An encounter between the ownship and a non-moving intruder with reversal. The ownship is safe, if a follow-up advisory exists to avoid conflicts related to a potentially unsafe current advisory.}
    \caption{An encounter between ownship and intruder showing a safeable region with reversal.}
    \label{fig:safeable}
\end{figure}
We consider all the possible positions and speeds that the ownship could end up after $\epsilon$ time, in particular the lowest and highest such speeds and positions. At the lowest position, the most extreme strengthening is most critical, and at the highest position the most extreme reversal is most critical. The two safe regions achieved by these strengthenings and reversals give us the regions of ownship position which we can achieve by acting at time $\epsilon$. This is precisely the safeable region: the safe region achieved by following a weaker advisory until time $\epsilon$ and then following a stronger advisory indefinitely.  

\subsection{Model} \label{sec:SafeNonModel}
\rref{model:safeablenon} extends the bounded-time \rref{model:boundednon} with changes highlighted in \textbf{bold}.
\setcounter{modelline}{0}
\begin{model}
\caption{Safeability for a non-maneuvering intruder}
\label{model:safeablenon}
\begin{align*}
\text{init}     &\,\bigl|\mline{line:safeablenon-init} \rp \geq0\land\hp>0\land r_v\geq0\land a_{\text{lo}}>0\land(w=-1\lor w=1) \land a_{\mathit{up}} > a_{\mathit{lo}} + 2c \land \bm{\epsilon \geq 0} \land {} \\
\text{R}        &\,\bigl|\mline{line:safeablenon-r}\bm{C^{\textbf{safeable($\epsilon$)}}_{\textbf{impl}}(r,h,v,w,v_{\textbf{lo}},v_{\textbf{up}})} \\
                &\,\phantom{\bigl|}\mline{line:safeablenon-imply}\limply  \\
                &\,\phantom{\bigl|}\mline{line:safeablenon-loopstart}\bigl[\bigl( \\
\text{advisory} &\left|
\begin{aligned}
  &\mline{line:safeablenon-adv}\quad\bigl((\pchoice{\pumod{w}{1}}{\pumod{w}{-1}}); \pumod{v_{\text{lo}}}{\ast};v_{\text{up}} := *; \ptest{\bm{C^{\textbf{safeable($\epsilon$)}}_{\textbf{impl}}(r,h,v,w,v_{\textbf{lo}},v_{\textbf{up}})}};\\
  &\mline{line:safeablenon-adv-2}\phantom{\quad\bigl((\pchoice{\pumod{w}{1}}{\pumod{w}{-1}}); \pumod{v_{\text{lo}}}{\ast};v_{\text{up}} := *;~}\pumod{\text{advisory}}{(w,v_{\text{lo}},v_{\text{up}})}\bigr); 
 \end{aligned}
 \right.\\
                &\,\phantom{\bigl|}\mline{line:safeablenon-timereset}\phantom{\bigl(}\quad \pumod{t}{0};  \\
\text{ownship}  &\,\bigl|\mline{line:safeablenon-ownship}\phantom{\bigl(}\quad\bigl(~ \pdual{\bigl(\prandom{a_o}; \ptest{(-a_\text{max} \leq a_o \leq a_\text{max})}\bigr)}; \\
\text{motion}   &\left|
\begin{aligned}
    &\mline{line:safeablenon-motion-ode}\phantom{\bigl(}\quad\phantom{\bigl(}~\{\D{r}=-r_v \syssep \D{h}=-v\syssep\D{v}=a_o \syssep t' = 1 \bm{~\&~ \bm{t \leq \epsilon}}\\
    &\mline{line:safeablenon-motion-event}\phantom{\bigl(\bigl(}  \quad\qquad\qquad\qquad \land ((w a_o \leq 0 \lor wv \leq wv_\text{up}) \cup (w a_o \geq 0 \land wv \geq wv_\text{up})) \}\\
\end{aligned}
\right.\\
&\,\phantom{\bigl|}\mline{line:safeablenon-eventloopend}\phantom{\bigl(}\quad\bigr)^\ast  \\
\neg\text{NMAC} &\,\Bigl|\mline{line:safeablenon-safe} \phantom{\Bigl[\,}\bigr)^\ast\bigr](\lvert r \rvert > \rp \lor \lvert h \rvert > \hp)
\end{align*}
\end{model}
We now require $\epsilon$ to be positive so that the initial $\epsilon$-time region is finitely bounded, and we use region $C^{\text{safeable($\epsilon$)}}_{\text{impl}}$ on lines \ref{line:safeablenon-r} and \ref{line:safeablenon-adv}. 
The extensive changes to the safe region are discussed next.

\subsection{Implicit Formulation of the Safe Region} \label{sec:SafeNonRegions}
The region $C^{\text{safeable($\epsilon$)}}_{\text{impl}}$ again consists of a lower bound and an upper bound: regions $L^{\text{safeable($\epsilon$)}}_{\text{impl}}$ and $U^{\text{safeable($\epsilon$)}}_{\text{impl}}$ now combine two separate conditions: a region up to time $\epsilon$ and a region from time $\epsilon$ onward. 
Up to time $\epsilon$, we follow bounded-time $L^{\epsilon}_{\text{impl}}$ and $U^{\epsilon}_{\text{impl}}$ from \rref{def:twosidedepsilon}. 
In order to be allowed to issue an advisory, the region now encodes that from time $\epsilon$ onward there must exist a new advisory given the potential ownship velocity and vertical separation at time $\epsilon$ under which the infinite regions $L^{-1}_{\text{impl}}$ and $U^{-1}_{\text{impl}}$ are satisfied. 

Note that this model is similar to the previous bounded-time \rref{model:boundednon} but also proves liveness: after the $\epsilon$ time bound, we guarantee the existence of another advisory which keeps the ownship safe. The final region $C^{\text{safeable($\epsilon$)}}_{\text{impl}}$ is a disjunction of $L^{\text{safeable($\epsilon$)}}_{\text{impl}}$ and $U^{\text{safeable($\epsilon$)}}_{\text{impl}}$.

\begin{definition}[Implicit two-sided safeable region]%
\label{def:implicitsafeability}%
\allowdisplaybreaks%
\begin{equation*}
\begin{split}
    L^{\text{safeable}(\epsilon)}_{\text{impl}}&(r,h,v,w,v_\text{lo}) \equiv L^{\epsilon}_{\text{impl}}(r,h,v,w,v_{\text{lo}}) \land \\
    &\forall h^{\text{ex}}_L \forall v^{\text{ex}}_L \left( \left( 0 \leq \epsilon < T_{\text{lo}}(v,w,v_\text{lo}) \land   h^{\text{ex}}_L = \frac{wa_{\text{lo}}}2\epsilon^2 + v\epsilon \land v^{\text{ex}}_L = wa_{\text{lo}}\epsilon+v  \right. \right.\\
    &\phantom{\forall h^{\text{ex}}_L \forall v^{\text{ex}}_L \Bigl( \Bigl(}\left. \left. \lor~  \epsilon \geq T_{\text{lo}}(v,w,v_\text{lo})  \land h^{\text{ex}}_L = v^{\text{ex}}_L\epsilon - \frac{w\text{max}(0,w(v_{\text{lo}}-v))^2}{2a_{\text{lo}}}  \land  v^{\text{ex}}_L = v_{\text{lo}} \right) \right. \\
    &\phantom{\forall h^{\text{ex}}_L \forall v^{\text{ex}}_L \Bigl(} \left. \limply \exists v^{\text{ex}}_{\text{lo}} L^{-1}_{\text{impl}}(r-r_v\epsilon,h-h^{\text{ex}}_L,v^{\text{ex}}_L,w,v^{\text{ex}}_{\text{lo}})\right)
\end{split}
\end{equation*}
\begin{equation*}
    \begin{split}
    U^{\text{safeable}(\epsilon)}_{\text{impl}}&(r,h,v,w,v_\text{up}) \equiv U^{\epsilon}_{\text{impl}}(r,h,v,w,v_{\text{up}}) \land \\
   &\forall h^{\text{ex}}_U \forall v^{\text{ex}}_U \left( \Bigl( 0 \leq \epsilon < T_{\text{up}}(v,w,v_\text{up}) \land   h^{\text{ex}}_U = \frac{wa_{\text{up}}}2\epsilon^2 + v\epsilon \land v^{\text{ex}}_U = wa_{\text{up}}\epsilon+v \right.\\
    &\phantom{\forall h^{\text{ex}}_U \forall v^{\text{ex}}_U \Bigl( \Bigl( } \left. \lor~ \epsilon \geq T_{\text{up}}(v,w,v_\text{up}) \land h^{\text{ex}}_U = v^{\text{ex}}_U\epsilon - \frac{w\text{max}(0,w(v_{\text{up}}-v))^2}{2a_{\text{up}}} \right. \\ 
    &\phantom{\forall h^{\text{ex}}_U \forall v^{\text{ex}}_U \Bigl( \Bigl( ~\lor~ \epsilon \geq T_{\text{up}}(v,w,v_\text{up})} \left. \land   v^{\text{ex}}_U =  w\text{max}(wv_{\text{up}},wv)  \Bigr) \right.\\
    &\phantom{\forall h^{\text{ex}}_U \forall v^{\text{ex}}_U \Bigl(} \left. \limply \exists v^{\text{ex}}_{\text{up}}  L^{-1}_{\text{impl}}(r-r_v\epsilon,h-h^{\text{ex}}_U,v^{\text{ex}}_U,-w,v^{\text{ex}}_{\text{up}})\right)
    \\
    C^{\text{safeable($\epsilon$)}}_{\text{impl}}&(r,h,v,w,v_\text{lo},v_\text{up}) \equiv wv_\text{lo} \leq wv_\text{up} \land \left(L^{\text{safeable($\epsilon$)}}_{\text{impl}}(r,h,v,w,v_\text{lo})  \lor U^{\text{safeable($\epsilon$)}}_{\text{impl}}(r,h,v,w,v_\text{up}) \right)
    \end{split}
\end{equation*}
with $L^{-1}_{\text{impl}}$, $T_{\text{lo}}$ per \rref{def:implicitinfinite}, and $L^{\epsilon}_{\text{impl}}$, $U^{\epsilon}_{\text{impl}}$, $T_{\text{up}}$ per \rref{def:twosidedepsilon}.
\end{definition}

\begin{theorem}
\label{thm:safeablenon}
The \dGL formula given in \rref{model:safeablenon} is valid. That is as long as the advisories obey formula $C^{\text{safeable($\epsilon$)}}_{\text{impl}}(r,h,v,w,v_\text{lo},v_\text{up})$ in \rref{def:implicitsafeability} the winning strategy will avoid NMAC.
\end{theorem}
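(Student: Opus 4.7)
The plan is to prove the outer loop invariant $C^{\text{safeable}(\epsilon)}_\text{impl}(r,h,v,w,v_\text{lo},v_\text{up})$ of \rref{model:safeablenon}: it holds initially by the precondition on \rref{line:safeablenon-r}, and every time a new advisory is issued on \rref{line:safeablenon-adv} the test reestablishes it for the fresh $(w,v_\text{lo},v_\text{up})$. Under the invariant, the first conjunct of either $L^{\text{safeable}(\epsilon)}_\text{impl}$ or $U^{\text{safeable}(\epsilon)}_\text{impl}$ coincides with $L^{\epsilon}_\text{impl}$ or $U^{\epsilon}_\text{impl}$ from \rref{def:twosidedepsilon}, so the $\epsilon$-bounded safety argument of \rref{thm:bounded-non} directly establishes the postcondition $|r|>r_p \lor |h|>h_p$ throughout each ODE episode.

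For the winning strategy in the $(\cdot)^d$ of \rref{line:safeablenon-ownship} I would case-split on which disjunct of the invariant is currently active and reuse the bounded-time strategy of \rref{thm:bounded-non}. If the lower disjunct holds, the ownship picks $a_o = wa_\text{lo}$ while $wv<wv_\text{lo}$ and $a_o=0$ once $wv_\text{lo} \leq wv$; the event-triggered design of lines~\ref{line:safeablenon-motion-ode}--\ref{line:safeablenon-motion-event} then hands back control before $wv$ exceeds $wv_\text{up}$, and this choice tracks the lower nominal trajectory exactly, so at $t=\epsilon$ the ownship sits at a state $(h^{\text{ex}}_L, v^{\text{ex}}_L)$ matching one of the two disjuncts in the antecedent of the $\forall$-conjunct of $L^{\text{safeable}(\epsilon)}_\text{impl}$. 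A symmetric strategy tracks the upper nominal trajectory when $U^{\text{safeable}(\epsilon)}_\text{impl}$ is active, with a potential reversal of $w$ captured by the flipped sign inside $L^{-1}_\text{impl}$ of \rref{def:implicitsafeability}.

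Reestablishing the invariant at the start of the next iteration is the key step. Having arrived at precisely the extremal state described by the universally-quantified $(h^{\text{ex}}, v^{\text{ex}})$, I instantiate the $\forall$-clause in the active disjunct with that state, yielding $\exists v^{\text{ex}}_\text{lo}\,L^{-1}_\text{impl}(r-r_v\epsilon, h-h^{\text{ex}}, v^{\text{ex}}, \pm w, v^{\text{ex}}_\text{lo})$. This witness becomes the next $v_\text{lo}$ in \rref{line:safeablenon-adv}. Because $L^{-1}_\text{impl}$ is an infinite-time region, it implies $L^{\epsilon}_\text{impl}$ at the new state and, by reusing the same advisory indefinitely, also discharges the embedded existential in $L^{\text{safeable}(\epsilon)}_\text{impl}$; paired with any compatible $v_\text{up}$ (for instance $v_\text{up}=v_\text{lo}$), this rebuilds $C^{\text{safeable}(\epsilon)}_\text{impl}$ and closes the induction.

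The main obstacle is the interplay between the event-triggered inner loop (lines~\ref{line:safeablenon-ownship}--\ref{line:safeablenon-eventloopend}) and the closed-form extremal state $(h^{\text{ex}}, v^{\text{ex}})$ used in the safeable region. I expect to need an inner-loop invariant simultaneously tracking (a) bounded-time safety via $L^{\epsilon}_\text{impl}$ or $U^{\epsilon}_\text{impl}$, (b) that the ownship's actual trajectory stays exactly on the selected one-sided nominal trajectory until $t=\epsilon$, and (c) that after any handover triggered by $wv$ crossing $wv_\text{up}$ the strategy resumes without leaving that trajectory, so the final state precisely matches one of the disjuncts over which $\forall h^{\text{ex}}\,\forall v^{\text{ex}}$ was formulated in \rref{def:implicitsafeability}.
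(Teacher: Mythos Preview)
Your overall plan matches the paper's: the outer loop invariant is $C^{\text{safeable}(\epsilon)}_\text{impl}$, and the winning strategy for the lower disjunct ($a_o = wa_\text{lo}$ when $wv < wv_\text{lo}$, else $a_o = 0$) is exactly what the paper uses. One difference is that the paper does \emph{not} case-split on which disjunct is active; the single strategy above is used regardless, and is shown to keep the ownship between both nominal trajectories.

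The substantive problem is your third paragraph. The advisory choice on \rref{line:safeablenon-adv} is \emph{not} inside the dual operator, so it is resolved by Demon, not by the ownship. You never get to supply the existential witness $v^{\text{ex}}_\text{lo}$ as ``the next $v_\text{lo}$''; Demon picks the next $(w,v_\text{lo},v_\text{up})$, and the test $\ptest{C^{\text{safeable}(\epsilon)}_\text{impl}}$ is Demon's burden. If Demon passes the test, the invariant is reestablished for free by the test itself; if Demon cannot pass it, the box formula holds vacuously at that point. Consequently, the whole argument about tracking the nominal trajectory \emph{exactly} so as to land at the extremal state $(h^\text{ex},v^\text{ex})$ and then harvest the existential witness is not part of the inductive step for safety. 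What you are actually proving there is that Demon always \emph{has} a valid move---the non-vacuity/liveness property the paper mentions informally---but that is a side benefit of the safeable region, not a proof obligation for the validity of the box formula.

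What remains for the outer induction is only that, starting from $C^{\text{safeable}(\epsilon)}_\text{impl}$ after the test and $t{:=}0$, the inner event-triggered loop under the winning strategy maintains the invariant and keeps $\lvert r\rvert>r_p \lor \lvert h\rvert>h_p$ throughout---which is the bounded-time argument of \rref{thm:bounded-non} that you already invoke. Your final paragraph about the inner-loop invariant is on the right track, but clause~(b) (staying \emph{exactly} on a nominal trajectory) is stronger than needed and is driven by the mistaken goal of producing a witness; staying between the two nominal trajectories suffices.
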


\begin{proof}
The \KeYmaeraX proof develops a winning strategy for choosing ownship control $a_o$:
\begin{equation*}
a_o = 
\begin{cases}
wa_\text{lo} & \text{if}~wv < wv_\text{lo} \\
0 & \text{if}~wv_\text{lo} \leq wv < wv_\text{up} \\
0 & \text{if}~wv_\text{up} \leq wv
\end{cases}
\end{equation*}
The proof is more involved, because the theorem is stronger, but the strategy is as in the bounded-time case.
We pick $wa_\text{lo}$ to accelerate towards the advisory if the ownship is not yet in compliance $wv < wv_\text{lo}$. We can keep this strategy until the event-trigger that the ownship has reached the advisory $wv_\text{lo}$. 
When $wv_\text{lo} \leq wv < wv_\text{up}$, we pick $a_o=0$ so that $wv$ stays within $wv_\text{lo}$ and $wv_\text{up}$. The ownship can choose an advisory that does not keep it within this range indefinitely, but it then relies on the event-trigger to notify when it is no longer in compliance,
necessitating a new choice of acceleration. In the overcompliance case, we again choose $0$ for $wa_\text{lo}$. The safe region $C^{\text{safeable($\epsilon$)}}_{\text{impl}}$ serves as a loop invariant. Due to this invariant, we know the existence of a future advisory such that the 
ownship will be safe indefinitely after using these strategies for the initial $\epsilon$ time. 
\end{proof}

\section{Safeability for a Vertically-Maneuvering Intruder} \label{sec:SafeVert}
Taking all of the groundwork laid by the previous sections, we finally present the model and safeable regions generalized for the case where the intruder can change its vertical acceleration. 

\subsection{Model}
\rref{model:safeablevert} extends bounded-time \rref{model:boundedvert} with changes highlighted in \textbf{bold}.
\setcounter{modelline}{0}
\begin{model}[tbhp]
\caption{Safeability for a vertically-maneuvering intruder}
\label{model:safeablevert}
\begin{align*}
\text{init}   &\left|
\begin{aligned}
    &\mline{line:safeablevert-init1} \rp \geq0\land\hp>0\land r_v\geq0 \land c > 0 \land a_{\text{lo}}>0\land(w=-1\lor w=1) \land {} \\
    &\mline{line:safeablevert-init2} a_{\textbf{up}} > a_{\textbf{lo}} + 2c \land a_\text{max} \geq a_\text{lo}+c  \land \epsilon \geq 0 \land {} \\
\end{aligned}
\right.\\
\text{R}        &\,\bigl|\mline{line:safeablevert-r} \bm{C_{\textbf{impl}}^{\textbf{safeable}(\epsilon)}(r,h,v,w,v_{\textbf{lo}},v_{\textbf{up}})} \\
                &\,\phantom{\bigl|}\mline{line:safeablevert-imply}\limply  \\
                &\,\phantom{\bigl|}\mline{line:safeablevert-loopstart}\bigl[\bigl( \\
\text{advisory} &\left|
\begin{aligned}
  &\mline{line:safeablevert-adv}\quad\bigl((\pchoice{\pumod{w}{1}}{\pumod{w}{-1}}); \pumod{v_{\text{lo}}}{\ast};v_{\text{up}} := *; \bm{\ptest{C_{\textbf{impl}}^{\textbf{safeable}(\epsilon)}(r,h,v,w,v_{\textbf{lo}},v_{\textbf{up}})}};\\
  &\mline{line:safeablevert-adv-2}\phantom{\quad\bigl((\pchoice{\pumod{w}{1}}{\pumod{w}{-1}}); \pumod{v_{\text{lo}}}{\ast};v_{\text{up}} := *;~} \pumod{\text{advisory}}{(w,v_{\text{lo}},v_{\text{up}})}\bigr); 
 \end{aligned}
 \right.\\
                &\,\phantom{\bigl|}\mline{line:safeablevert-timereset}\phantom{\bigl(}\quad \pumod{t}{0};  \\
\text{estimate}  &\,\bigl|\mline{line:safeablevert-estimate}\phantom{\bigl(}\quad\bigl(~\pdual{(c_o := \ast;\ptest{c_o \geq 0})}; \\
\text{ownship}  &\,\bigl|\mline{line:safeablevert-ownship}\phantom{\bigl(\quad\bigl(~}\bigl(~ \pdual{(\pumod{a_o}{\ast}; \ptest{(-a_\text{max} \leq a_o \leq a_\text{max})})}; \\
\text{intruder}  &\,\bigl|\mline{line:safeablevert-intruder}\phantom{\bigl(\quad\bigl(~\bigl(~}\bigl(
\prandom{a_i};\ptest{(-c_o < a_i < c_o)};\\
\text{motion}   &\left|
\begin{aligned}
    &\mline{line:safeablevert-motion-ode}\phantom{\bigl(\quad\bigl(~\bigl(~\bigl(}\{\D{r}=-r_v \syssep \D{h}=-v\syssep \D{v}=a_o-a_i \syssep \D{t}=1 ~\&~ \bm{t \leq \epsilon}\\
    &\mline{line:safeablevert-motion-events}\phantom{\bigl(\bigl(}  \quad\qquad\qquad\qquad  \land (wv \leq wv_\text{lo} \cup wv_\text{lo} \leq wv \leq wv_\text{up} \cup wv_\text{up} \leq wv) \}  \\
\end{aligned}
\right.\\
&\,\phantom{\bigl|}\mline{line:safeablevert-odeloopend}\phantom{\bigl(\quad\bigl(\quad}\bigr)^\ast\\
&\,\phantom{\bigl|}\mline{line:safeablevert-aoloopend}\phantom{\bigl(\quad\bigl(}\bigr)^\ast  \\
&\,\phantom{\bigl|}\mline{line:safeablevert-coloopend}\quad\bigr)^\ast  \\
\neg\text{NMAC} &\,\Bigl|\mline{line:safeablevert-safe} \phantom{\Bigl[\,}\bigr)^\ast\bigr]\bigl(\lvert r \rvert > \rp \lor \lvert h \rvert > \hp\bigr)
\end{align*}
\end{model}
Again, we represent the intruder choice for $a_i$ at \rref{line:safeablevert-intruder} and the bound $c_o$ on this choice, as well as the loop around the intruder choice and dynamics to reflect that the intruder can change its acceleration throughout the encounter without the ownship being alerted to these changes. 

\subtitle{Implicit formulation of the safe region}
The safe region for \rref{model:safeablevert} follows \rref{def:implicitsafeability} due to the relativization of the vertical rate of closure and vertical separation.

\begin{theorem}
\label{thm:safeablevert}
The \dGL formula given in \rref{model:safeablevert} is valid. That is as long as the advisories obey formula $C^{\text{safeable($\epsilon$)}}_{\text{impl}}(r,h,v,w,v_\text{lo},v_\text{up})$ from \rref{def:implicitsafeability} the winning strategy will avoid NMAC.
\end{theorem}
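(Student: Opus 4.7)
The plan is to mirror the proof of \rref{thm:safeablenon} but combine it with the intruder-compensating strategy from \rref{thm:bounded-vert}, exploiting the fact that $h$ and $v$ are fully relativized and so $C^{\text{safeable}(\epsilon)}_{\text{impl}}$ from \rref{def:implicitsafeability} can be reused verbatim as the outer loop invariant. The preconditions on lines~\ref{line:safeablevert-init1}--\ref{line:safeablevert-r} directly establish this invariant, and the postcondition $|r|>r_p \lor |h|>h_p$ follows from instantiating the universal $t$ in $L^\epsilon_{\text{impl}}$ or $U^\epsilon_{\text{impl}}$ at $t=0$, exactly as in \rref{thm:safeablenon}. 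The advisory step on \rref{line:safeablevert-adv} preserves the invariant because its test requires it.

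For the remainder of the game on lines~\ref{line:safeablevert-estimate}--\ref{line:safeablevert-coloopend}, I would resolve the ownship's existential choices by picking the worst-case estimate $c_o := c$ on \rref{line:safeablevert-estimate} and then using the three-case strategy
\[
  a_o = \begin{cases}
    w(a_\text{lo}+c) & \text{if } wv \leq wv_\text{lo}\\
    0 & \text{if } wv_\text{lo} \leq wv \leq wv_\text{up}\\
    -wc & \text{if } wv_\text{up} \leq wv
  \end{cases}
\]
from \rref{thm:bounded-vert}. The init precondition $a_\text{max}\geq a_\text{lo}+c$ on \rref{line:safeablevert-init2} discharges the ownship admissibility test on \rref{line:safeablevert-ownship}. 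With $c_o=c$, the intruder's test $\ptest{-c<a_i<c}$ is unconditionally compatible, so the inner ODE-intruder loop (lines~\ref{line:safeablevert-intruder}--\ref{line:safeablevert-odeloopend}) always progresses through the event-triggered motion. Because the dynamics act on the relative climb rate $\D{v}=a_o-a_i$ and $|a_i|<c$, the choices above enforce, per event segment, the same bounds on $|a_o-a_i|$ (namely $\geq a_\text{lo}$ in direction $w$ when chasing the advisory, $\leq c$ in magnitude when already in compliance, and $\leq c$ away when overcomplying) that the non-maneuvering proof of \rref{thm:safeablenon} relied on for $\D{v}=a_o$. Thus the bounded-time $L^\epsilon_{\text{impl}}$ or $U^\epsilon_{\text{impl}}$ conjunct of $C^{\text{safeable}(\epsilon)}_{\text{impl}}$ is preserved along the ODE by the same differential-invariant reasoning as in \rref{thm:bounded-vert}, one event domain at a time, with the event detection on \rref{line:safeablevert-motion-events} handing control back so the strategy case can be updated.

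The main obstacle is preserving the existential follow-up-advisory conjunct of \rref{def:implicitsafeability} along the ODE while the intruder maneuvers, since that conjunct locks the candidate exit state $(h^{\text{ex}}_L,v^{\text{ex}}_L)$ to the nominal trajectory driven by $a_\text{lo}$ over the full horizon $\epsilon$, whereas the actual relative trajectory depends on $a_o-a_i$. The approach is to show that under the chosen strategy the relative climb rate stays at least as advisory-compliant as the nominal (more specifically $w(v - \text{nominal}) \geq 0$ after the event), so the remaining-horizon state at time $t\leq\epsilon$ still admits the \emph{same} existential witness $v^{\text{ex}}_{\text{lo}}$ that worked at $t=0$, just evaluated on the shifted coordinates $(r-r_v t, h-h_{\text{actual}}(t), v(t))$. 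Concretely, I would differentially cut in the progress facts ``$wv \geq wv_{\text{current advisory bound}}$ along the active segment'' and ``$r$ has decreased by exactly $r_v t$,'' then reduce the existential goal to the one that held at the outer-loop entry by monotonicity in the remaining time $\epsilon - t$. Mechanizing this existential preservation under the adversarial quantifier alternation introduced by $\pdual{\cdot}$ on lines~\ref{line:safeablevert-estimate}--\ref{line:safeablevert-ownship} versus the intruder's universal $\prandom{a_i}$ on \rref{line:safeablevert-intruder}, inside the three nested loops, is the real work and will likely dominate the \KeYmaeraX tactic, leaving a routine but large first-order arithmetic obligation to dispatch.
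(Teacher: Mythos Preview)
Your proposal is correct and follows essentially the same approach as the paper: the paper's proof also fixes the worst-case estimate $c_o=c$, uses exactly your three-case strategy $a_o\in\{w(a_\text{lo}+c),\,0,\,-wc\}$, and takes $C^{\text{safeable}(\epsilon)}_{\text{impl}}$ as the loop invariant, noting that the strategy matches the bounded-time case of \rref{thm:bounded-vert}. Your discussion of preserving the existential follow-up-advisory conjunct under intruder maneuvers via relativization and differential cuts is more explicit than the paper's terse proof sketch, but it is the same underlying argument rather than a different route.
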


\begin{proof}
The \KeYmaeraX proof develops a winning strategy for choosing ownship control $a_o$ in reaction to the worst-case intruder acceleration estimate $c_o=c$:
\begin{equation*}
    a_o =
    \begin{cases}
        w(a_\text{lo}+c) & \text{if}~wv \leq wv_\text{lo} \\
        0 & \text{if}~wv_\text{lo} \leq wv \leq wv_\text{up} \\
        -wc & \text{if}~wv_\text{up} \leq wv \\
    \end{cases}
\end{equation*}
This strategy matches that of the bounded time case. Again, if the ownship is not yet in compliance $wv \leq wv_\text{lo}$, we pick $a_o = w(a_\text{lo}+c)$. When the
ownship is compliant, with $wv_\text{lo} \leq wv \leq wv_\text{up}$, we pick $a_o=0$, and in the overcompliance case, we accelerate downwards with $a_o=-wc$. 
With the safe region $C^{\text{safeable($\epsilon$)}}_{\text{impl}}$ as the loop invariant, we know that a future advisory exists which will keep the ownship safe indefinitely, 
assuming the ownship follows these strategies up to $\epsilon$ time. 
\end{proof}

\section{Discussion} \label{sec:discussion}

Each of the proofs of the theorems presented in this article were completed in \KeYmaeraX with support of Mathematica's implementation of real quantifier elimination~\cite{mathematica}.
Besides designing the hybrid game model and identifying the safe regions for its advisories, the main insights in the proofs are the invariants and winning strategies.
The verified models crucially extends previous work with a rich representation of intruder behavior and its potentially adversarial nature.
The benefit of attributing different actions to different players in the game is that we can now represent interactions between the ownship and the intruder, including worst-case non-cooperative interactions, as well as interactions in which the intruder and ownship react to one another.

Hybrid games verification, however, leads to even more complicated real arithmetic decision problems, which, even if decidable, may need time-consuming, handmade proof simplification for the proof to close. The arithmetic complexity of the safe regions required an intimate knowledge in order to simplify the proof into pieces digestible to Mathematica. This was particularly relevant compared to hybrid systems ACAS~X verification~\cite{acasx} due to the addition of quantifier alternation stemming from games and new variables and constants to represent the intruder's maneuverability, as well as the introduction of more relationships between these values in the preconditions.
The majority of manual arithmetic simplifications are case splitting (e.g., between $w=-1$ or $w=1$), providing witnesses to quantifiers according to the winning strategy to help solvers, abbreviating terms to make inequality transitivity obvious, and selecting the relevant assumptions from the list of all assumptions.
\rref{tab:proofstatistics} compares our proofs in terms of tactic size as an indicator for relative manual proof effort, and proof checking duration as an indicator of arithmetic complexity.
\begin{table}[t]
    \caption{Proof statistics: tactic size and total proof checking duration, duration of all proof attempts at real arithmetic proof obligations (column QE), and duration of the successful attempts (column RCF)}
    \label{tab:proofstatistics}
    \centering
    \begin{tabular}{llccrrrr}
        \toprule
        & \textbf{Model} & \multicolumn{2}{c}{\bf Intruder} & \textbf{Tactic steps} & \multicolumn{3}{c}{\textbf{Proof checking} [s]}\\
        \cmidrule{3-4} \cmidrule{6-8}
        & & Vert. & Horiz. & & Total & QE & RCF \\
        \midrule
        \multirow{3}{*}{\footnotesize Infinite} & \rref{model:infinitenon}, \rref{thm:nonmaneuvering} & & & 163 & 18 & 15 & 12 \\
        & \rref{model:infinitevert}, \rref{thm:infinitevert} & $\checkmark$ & & 196 & 35 & 30 & 26 \\
        & \rref{model:infinitehoriz}, \rref{thm:implicitinfinitehoriz} & & $\checkmark$ & 242 & 35 & 29 & 23 \\
        \midrule
        \multirow{2}{*}{\footnotesize Bounded} & \rref{model:boundednon}, \rref{thm:bounded-non} & & & 540 & 112 & 97 & 79 \\
        & \rref{model:boundedvert}, \rref{thm:bounded-vert} & $\checkmark$ & & 724 & 319 & 298 & 177 \\
        \midrule
        \multirow{2}{*}{\footnotesize Safeable} & \rref{model:safeablenon}, \rref{thm:safeablenon} & & & 3402 & 1786 & 886 & 279 \\
        & \rref{model:safeablevert}, \rref{thm:safeablevert} & $\checkmark$ & & 2782 & 1404 & 1163 & 247 \\
        \bottomrule
    \end{tabular}
\end{table}
Its content has to be taken with a grain of salt, because there is considerable freedom in designing proofs and in making the tradeoff between manual steps and duration of arithmetic proof obligations.
We observe a few general trends across models:
\begin{itemize}
    \item Even with manual simplification, the real arithmetic proof obligations are responsible for a considerable portion of the proof-checking duration;
    \item Infinite-time models are considerably easier (smaller tactic size, faster proof) than bounded-time models, which are in turn considerably easier than safeable models;
    \item Intruder maneuverability increases proof complexity, but experience from controlling the branching when proving the many ways of ownship and intruder interaction in the non-maneuvering safeable model helped finding a smaller tactic design and faster proof in the vertical safeable model;
    \item The large number of combinations of ownship and intruder interactions in the safeable models makes it infeasible to manually simplify all the arithmetic proof obligations, which results in considerably longer proof checking duration; the large difference in the proofs of \rref{model:safeablenon} and \rref{model:safeablevert} between the duration of all proof attempts in the tactic QE and the duration of its successful attempt in column RCF indicates a potential for improving tactic heuristics or parallelization.
\end{itemize}

The increased complexity that each new variable introduces motivated our decision to relativize the rates of closure. This simplified the model and arithmetic, and allowed for the same region from the corresponding non-maneuvering intruder to be used. We cannot further relativize the models beyond the rates of closure, however; it is important that each actor in the encounter have sole control over their own accelerations independently to properly model that both actors can affect the outcome of an encounter in this aircraft game scenario. 

In terms of proof construction, the main challenge was in the development of the winning strategies for the ownship. This is a particular challenge in proving hybrid games; the ownship must pick a strategy without any knowledge of what the intruder may end up choosing, only knowledge of the broad limitations of the intruder's maneuvering capabilities. In the context of ACAS~X, this is complicated by the fact that the ownship does not have access to the acceleration of the intruder, so it cannot react to the intruder's choice of acceleration to update its own choice. Therefore, it was imperative to have a proper strategy in choosing the pilot's acceleration in order to prove our intruder-maneuverability models. 

A final challenge in hybrid game modeling is in the correct assignment of player responsibility for a given action. It is vital to the fidelity of the model that the choices in the model be resolved by the correct player to prevent one player having an unfair advantage. For instance, if the choice of advisory were resolved by the ownship, the ownship could choose an optimal advisory to follow. 
However, the choice of advisory needs to be resolved by the intruder because that proves that \textit{all} of the advisories which satisfy the safe region can be chosen and are shown safe.

\section{Related Work} \label{sec:relatedwork}

Kochenderfer and Chryssanthacopoulos \cite{MDP} design the ACAS~X lookup tables, the verification of which motivates this work (see \rref{sec:ACASXSec}). Von Essen and Giannakopoulou \cite{vonessen} use probabilistic model-checking in their analysis of a similar Markov Decision Process  \cite{MDP}, to explore the probability of occurrence of various unfavorable events. The outcome is limited due to its discretization of the continuous dynamics in analyzing the system and the implausible assumption that the intruder follows a random walk in the decision space. 

Holland \emph{et al.} \cite{sim1} and Chludzinski \cite{sim2} simulate encounters, many of which come from recorded flight data. Lee, \emph{et al.} \cite{lee} develops a technique called differential adaptive stress testing to find scenarios in which TCAS does not result in an NMAC, but ACAS~X does. These simulations provide interesting evaluations of the performance of ACAS~X, but only explore a finite set of the state space and therefore cannot allow any conclusions about the infinitely many other possible behaviors. 

Julian and Kochenderfer \cite{DDN1} train a deep neural network to approximate the ACAS~X lookup table to reduce the storage needs and  runtime of the system, and Irfan, \textit{et al.} \cite{DDN2} and Julian and Kochenderfer \cite{DDN3} explore applying formal methods to verify such neural networks. One drawback to this approach is that SMT does not support continuous dynamics, and all queries to the SMT solver must be in the form of discrete, linear regions. The inherent nonlinearity of the relevant regions when verifying ACAS~X severely limits the verifying ability of this approach.

Kouskoulas, \emph{et al.} \cite{jbj} develop a formally verified, quantifier-free predicate which, given a sequence of timed ownship and intruder maneuvers, checks whether or not an NMAC may occur. They do so by establishing envelopes around either aircraft that contain all altitudes reachable by each aircraft through each maneuver. While this work verifies the safety of pre-determined encounters between an ownship and intruder accelerating non-deterministically, it requires knowledge of the full sequence of maneuvers which the intruder will perform beforehand. Our work instead proves safety assuming no knowledge of the sequence of intruder maneuvers, better capturing the uncertain adversarial abilities of the intruder. 

Lygeros and Lynch \cite{lygeros} explore verification of the conflict resolution algorithms used in TCAS, a predecessor of ACAS~X, with hybrid techniques. This work is limited in its overzealous use of assumptions, for instance in assuming that two aircraft both using TCAS will ultimately be given opposite advisories in an encounter. Our work does not make assumptions about the actions of the intruder and ownship relative to one another; the decisions of one aircraft are independent from those of the other as are best expressed with game models.

Tomlin \emph{et al.} \cite{tomlin} presents a methodology with which to develop safe collision-avoidance maneuvers using hybrid systems. Platzer and Clarke \cite{clarke}, Loos \emph{et al.} \cite{loos} and Ghorbal \emph{et al.} \cite{ghorbal} also use hybrid systems to design and verify their own horizontal collision avoidance maneuvers. Dowek \emph{et al.} \cite{dowek} and Galdino \emph{et al.} \cite{galdino} design their own algorithms for collision avoidance, known as KB3D and KB2D, respectively, and verify their geometry using the PVS theorem prover. 

Other approaches for hybrid games have limited real-world applications due to the overly-restrictive assumptions that they place on the systems which they represent. Henzinger \emph{et al.} \cite{rectangle} work with rectangular hybrid games, which require strict upper and lower bounds on the continuous dynamics of the system and forgetful transitions. This work is one of the first case studies for hybrid games verification, the only example that we are aware of coming from Quesel and Platzer \cite{robot} in their feasibility study involving an abstract robot in a factory.

\section{Conclusion and Future Work} \label{sec:conclusion}
We applied hybrid games to the verification of ACAS~X to prove that under limited horizontal and vertical maneuverability of an adversarial intruder, an ownship given a safe advisory from ACAS~X always has a strategy to find a trajectory which avoids an NMAC despite subsequent intruder maneuvering. This work employed the principle of our previous work \cite{acasx} to identify regions of safety, whereby following any advisory in the safe region has an ownship strategy that will keep the aircraft clear of an NMAC. The safe regions and intruder capabilities are symbolic such that these models can be reused for future versions of ACAS~X and apply to any intruder which is less maneuverable than the ownship.
While increased arithmetic and proof complexity is the downside of working with hybrid games, the advantage is the significantly increased resulting predictive power, because collision freedom can be proved even if the intruder is maneuvering, which it will in reality.
In future work, we plan to explore verifying more complex ownship maneuvers in the horizontal direction and apply horizontal intruder maneuverability to the safeable model.

\begin{acks}
This research was sponsored by the \grantsponsor{afosr}{AFOSR}{} under grant number \grantnum{afosr}{FA9550-16-1-0288}.
The views and conclusions contained in this document are those of the authors and should not be interpreted as representing the official policies, either expressed or implied, of any sponsoring institution, the U.S. government or any other entity.
\end{acks}

\bibliographystyle{ACM-Reference-Format}
\bibliography{acasxgames-arxiv}

\end{document}